\algnewcommand\algorithmicinput{\textbf{Input:}}
\algnewcommand\INPUT{\item[\algorithmicinput]}
\algnewcommand\algorithmicoutput{\textbf{Output:}}
\algnewcommand\OUTPUT{\item[\algorithmicoutput]}
\theoremstyle{definition}
\newtheorem{definition}{Definition}
\newtheorem{proposition}{Proposition}
\newtheorem{theorem}{Theorem}
\newtheorem{remark}{Remark}
\newtheorem{example}{Example}
\newcommand{\remove}[1]{}
\newcommand\nc\newcommand
\nc{\vzero}{{\boldsymbol{0}}}
\nc{\bfC}{{\boldsymbol C}}
\nc{\bfc}{{\boldsymbol c}}
\nc{\bfr}{{\boldsymbol r}}
\nc\bfa{{\boldsymbol a}}\nc\bfA{{\boldsymbol A}}\nc\cA{{\mathcal A}}
\nc\bfb{{\boldsymbol b}}\nc\bfB{{\boldsymbol B}}\nc\cB{{\mathcal B}}
\nc\bfd{{\boldsymbol d}}\nc\bfD{{\boldsymbol D}}\nc\cD{{\mathcal D}}
\nc\bfe{{\boldsymbol e}}\nc\bfE{{\boldsymbol E}}\nc\cE{{\mathcal E}}
\nc\bff{{\boldsymbol f}}\nc\bfF{{\boldsymbol F}}\nc\cF{{\mathcal F}}
\nc\bfg{{\boldsymbol g}}\nc\bfG{{\boldsymbol G}}\nc\cG{{\mathcal G}}
\nc\bfh{{\boldsymbol h}}\nc\bfH{{\boldsymbol H}}\nc\cH{{\mathcal H}}
\nc\bfi{{\boldsymbol i}}\nc\bfI{{\boldsymbol I}}\nc\cI{{\mathcal I}}
\nc\bfj{{\boldsymbol j}}\nc\bfJ{{\boldsymbol J}}\nc\cJ{{\mathcal J}}
\nc\bfk{{\boldsymbol k}}\nc\bfK{{\boldsymbol K}}\nc\cK{{\mathcal K}}
\nc\bfl{{\boldsymbol l}}\nc\bfL{{\boldsymbol L}}\nc\cL{{\mathcal L}}
\nc\bfm{{\boldsymbol m}}\nc\bfM{{\boldsymbol M}}\nc\cM{{\mathcal M}}
\nc\bfn{{\boldsymbol n}}\nc\bfN{{\boldsymbol N}}\nc\cN{{\mathcal N}}
\nc\bfo{{\boldsymbol o}}\nc\bfO{{\boldsymbol O}}\nc\cO{{\mathcal O}}
\nc\bfp{{\boldsymbol p}}\nc\bfP{{\boldsymbol P}}\nc\cP{{\mathcal P}}
\nc\bfq{{\boldsymbol q}}\nc\bfQ{{\boldsymbol Q}}\nc\cQ{{\mathcal Q}}
\nc\bfs{{\boldsymbol s}}\nc\bfS{{\boldsymbol S}}\nc\cS{{\mathcal S}}
\nc\bft{{\boldsymbol t}}\nc\bfT{{\boldsymbol T}}\nc\cT{{\mathcal T}}
\nc\bfu{{\boldsymbol u}}\nc\bfU{{\boldsymbol U}}\nc\cU{{\mathcal U}}
\nc\bfv{{\boldsymbol v}}\nc\bfV{{\boldsymbol V}}\nc\cV{{\mathcal V}}
\nc\bfw{{\boldsymbol w}}\nc\bfW{{\boldsymbol W}}\nc\cW{{\mathcal W}}
\nc\bfx{{\boldsymbol x}}\nc\bfX{{\boldsymbol X}}\nc\cX{{\mathcal X}}
\nc\bfy{{\boldsymbol y}}\nc\bfY{{\boldsymbol Y}}\nc\cY{{\mathcal Y}}
\nc\bfz{{\boldsymbol z}}\nc\bfZ{{\boldsymbol Z}}\nc\cZ{{\mathcal Z}}
\newcommand{\FF}{\mathbb{F}}
\newcommand{\ZZ}{\mathbb{Z}}
\nc{\Read}{{\sf R}}
\nc{\cost}{{\sf cost}}
\nc{\sqs}{{\rm SQS}}
\newcommand{\diff}{{\rm diff}}
\newcommand\myshade{70} 
\title{Repairing with Zero Skip Cost} %\\[-2mm]}
\author{

  \IEEEauthorblockN{ Wenqin Zhang\IEEEauthorrefmark{1},
  Yeow Meng Chee\IEEEauthorrefmark{2},
  Son Hoang Dau\IEEEauthorrefmark{3}, 
  Tuvi Etzion\IEEEauthorrefmark{2}\IEEEauthorrefmark{4},
  Han Mao Kiah\IEEEauthorrefmark{5}, 
Yuan Luo\IEEEauthorrefmark{1} 
 }\\
  %{\color{red}CHANGE ORDER}
   \IEEEauthorblockA{\small \IEEEauthorrefmark{1}School of Electronic Information and Electrical Engineering, Shanghai Jiao Tong University, Shanghai, China
    }\\
\IEEEauthorblockA{\small 
 \IEEEauthorrefmark{2}Dept. of Industrial Systems Engineering and Management, National University of Singapore
    }\\
\IEEEauthorblockA{\small 
 \IEEEauthorrefmark{3}School of Computing Technologies, RMIT University, Melbourne, VIC, Australia
    }\\
 \IEEEauthorblockA{\small \IEEEauthorrefmark{4}Computer Science Faculty, Technion, Israel Institute of Technology,
    } \\   
 \IEEEauthorblockA{\small 
 \IEEEauthorrefmark{5}School of Physical and Mathematical Sciences, Nanyang Technological University, Singapore
    }

  {\footnotesize wenqin\_zhang@sjtu.edu.cn,
  ymchee@nus.edu.sg, sonhoang.dau@rmit.edu.au, etzion@cs.technion.ac.il, hmkiah@ntu.edu.sg, luoyuan@cs.sjtu.edu.cn}
  \vspace{-5mm}
  \thanks{\IEEEauthorrefmark{5}Corresponding author: Wenqin Zhang.}
  \thanks{  The work was supported in part by National Key $R\&D$ Program of China under Grant 2022YFA1005000,  National Natural Science Foundation of China under Grant 62171279, and Fundings of SJTU-Alibaba Joint Research Lab on Cooperative Intelligent Computing. The work of Son Hoang Dau was supported by ARC DECRA DE180100768. The work of Han Mao Kiah was supported by the Ministry of Education, Singapore, under its MOE AcRF Tier~2 Award under Grant MOE-T2EP20121-0007 and MOE AcRF Tier~1 Award under Grant RG19/23.
}
  }
\date{}
\begin{document}

\maketitle

\hspace*{-3mm}\begin{abstract}
To measure repair latency at helper nodes, we introduce a new metric called {\em skip cost} that quantifies the number of contiguous sections accessed on a disk.
We provide explicit constructions of zigzag\,codes and fractional\,repetition\,codes that incur zero skip cost.
%\vspace{-2mm}
\end{abstract}

\section{Introduction}

In a distributed storage system, a file is represented as $\bfx\in\FF^k$ for a finite field $\FF$. 
To protect against node failures, the data is encoded into a codeword $\bfc = (c_1,\ldots,c_n)$, where $n$ is larger than $k$ and each codesymbol $c_i$ is kept in node $i$. Traditionally, erasure coding is employed, enabling the recovery of the file $\bfx$ by contacting a subset of these nodes. 
In the case of maximum distance-separable (MDS) codes, an optimal strategy exists, allowing one to recover $\bfx$ from any $k$~nodes.

However, in the scenario of a single node failure, this approach is suboptimal, as we download $k$ codesymbols to recover just one codesymbol. 
Therefore, in~\cite{dimakis2010network}, the notion of {\em repair bandwidth} was introduced as a metric to measure the performance of a distributed storage system. Essentially, the repair bandwidth quantifies the total amount of information downloaded from {\em helper nodes} to repair any failed node. Following this seminal work, numerous coding proposals emerged and we direct the reader to~\cite{dimakis2011survey,liu2018overview,ramkumar2021codes} for detailed surveys.

Apart from repair bandwidth, additional factors contribute to latency in the repair process.  For example, the notion of access-optimality was consider in~\cite{ye2017explicit} and~\cite{vajha2018clay}, while here we introduce another metric. Specifically, in~\cite{wu2021}, the access latency at the helper nodes is attributed to two factors: data access time and data processing time. Now, the latter was addressed in earlier works where the concept of {\em repair-by-transfer} was introduced~\cite{shah2012}.
In the repair-by-transfer framework, a failed node is repaired through a simple transfer of data without the need for processing at the helper nodes%
\footnote{In~\cite{shah2012}, a more stringent requirement is placed, where the failed node performs no computation. We relax this condition for this work, focusing on reducing the delay at the helper nodes.}.
Here, we investigate two classes of codes exhibiting this property: {\em zigzag\,codes}~\cite{Tamo2013} and {\em fractional\,repetition\,codes}~\cite{elrouayheb2010}. 
On a more significant note, \cite{wu2021} characterized data access time not only by the amount of data read (typically known as I/O costs), but also the number of contiguous sections accessed on a disk%
\footnote{In fact, this issue was raised during a Huawei Industry Session during ISIT 2023. Dr Wu discussed this in his presentation ``Coding Challenges for Future Storage''.}. In this work, we propose a novel metric named {\em skip cost} to quantify this notion.

To illustrate this metric, we consider the toy examples presented in Fig.~\ref{fig:zigzag}. In (a), we have five nodes $\bfa^{(0)}$, $\bfa^{(1)}$, $\bfa^{(2)}$, $\bfp^{(0)}$ and $\bfp^{(1)}$. 
When $\bfa^{(2)}$ fails, the contents highlighted in {\color{blue} blue} are read (see Section~\ref{sec:zigzag} for details of the repair procedure). Notably, in node $\bfa^{(0)}$, the read contents are not contiguous and we quantify the gap as skip cost one. Summing up the skip costs across all helper nodes, we say that the skip cost of the repair is four.
In contrast, (b) illustrates an example of our proposed code construction. In this case, the repair bandwidth remains the same, but we see that all read contents are contiguous. Consequently, the skip cost is zero.

In this work, we present explicit constructions of array codes with zero skip cost.
Before delving into technical descriptions of our contributions, we establish some notations.

\begin{figure}[!t]
\footnotesize
\begin{center}
    \noindent(a) $(4\times 5,3)$-array code with skip cost four.
    
    \vspace{1mm}
%\footnotesize
\setlength\tabcolsep{1.5pt} 
    \begin{tabular}{|c||c|c|c|c|c|}
    \hline
    & $\bfa^{(0)}$ & $\bfa^{(1)}$ & $\bfa^{(2)}$ & 
    $\bfp^{(0)}$ & $\bfp^{(1)}$ \\
    & $\spadesuit$  & $\heartsuit$ &  $\clubsuit$ &  
    $\bfS_0=(\vzero,\vzero,\vzero)$ & $\bfS_1=(\vzero,\bfe_1,\bfe_2)$\\ \hline 
    $00$ & {\color{blue}$00_\spadesuit$} & {\color{blue}$00_\heartsuit$}  & {\color{red}$00_\clubsuit$} & 
    {\color{blue}$00_{ \spadesuit }\boxplus 00_{\heartsuit }\boxplus 00_{\clubsuit}$} & 
    {\color{blue}$00_{ \spadesuit }\boxplus 10_{\heartsuit }\boxplus 01_{\clubsuit}$} \\ 
    
    $01$ & $01_\spadesuit$ & $01_\heartsuit$  & {\color{red}$01_\clubsuit$} &
    $01_{ \spadesuit }\boxplus 01_{\heartsuit}\boxplus 01_{\clubsuit}$ &
    $01_{ \spadesuit }\boxplus 11_{\heartsuit}\boxplus 00_{\clubsuit}$ \\ 
    
    $10$ & {\color{blue}$10_\spadesuit$} & {\color{blue}$10_\heartsuit$}  & {\color{red}$10_\clubsuit$} & 
    {\color{blue}$10_{ \spadesuit }\boxplus 10_{\heartsuit }\boxplus 10_{\clubsuit}$} &
    {\color{blue}$10_{ \spadesuit }\boxplus 00_{\heartsuit }\boxplus 11_{\clubsuit}$} \\ 
    
    $11$ & $11_\spadesuit$ & $11_\heartsuit$  & {\color{red}$11_\clubsuit$} & 
    $11_{ \spadesuit }\boxplus 11_{\heartsuit }\boxplus 11_{\clubsuit}$ & 
    $11_{ \spadesuit }\boxplus 01_{\heartsuit }\boxplus 10_{\clubsuit}$ \\ \hline
    \end{tabular}
\end{center}

    \vspace{2mm}
    \begin{center}
    \noindent(b) $(4\times 6,3)$-array code code with skip cost zero.
    \vspace{1mm}
    
{\centering \scriptsize
    \setlength\tabcolsep{1pt} 
    \begin{tabular}{|c||c|c|c|c|c|c|}
    \hline
    & $\bfa^{(0)}$ & $\bfa^{(1)}$ & $\bfa^{(2)}$ & 
      $\bfp^{(0)}$ & $\bfp^{(1)}$ & $\bfp^{(2)}$ \\
    & $\spadesuit$  & $\heartsuit$ &  $\clubsuit$ &  
    $\bfS_0=(\vzero,\vzero,\vzero)$ & $\bfS_1=(\vzero,\bfe_2,\bfd_1)$ & $\bfS_1=(\vzero,\bfd_1,\bfd_2)$ \\ \hline 
    $00$ & {\color{blue}$00_\spadesuit$} & {\color{blue}$00_\heartsuit$}  & {\color{red}$00_\clubsuit$} & 
    {\color{blue}$00_{ \spadesuit }\boxplus 00_{\heartsuit }\boxplus 00_{\clubsuit}$} & 
    {\color{blue}$00_{ \spadesuit }\boxplus 01_{\heartsuit }\boxplus 10_{\clubsuit}$} &
    $00_{ \spadesuit }\boxplus 10_{\heartsuit }\boxplus 11_{\clubsuit}$  \\ 
    
    $01$ & {\color{blue}$01_\spadesuit$} & {\color{blue}$01_\heartsuit$}  & {\color{red}$01_\clubsuit$} &
    {\color{blue}$01_{ \spadesuit }\boxplus 01_{\heartsuit}\boxplus 01_{\clubsuit}$} &
    {\color{blue}$01_{ \spadesuit }\boxplus 00_{\heartsuit}\boxplus 11_{\clubsuit}$} &
    $01_{ \spadesuit }\boxplus 11_{\heartsuit }\boxplus 10_{\clubsuit}$  \\ 
    
    $10$ & $10_\spadesuit$ & $10_\heartsuit$  & {\color{red}$10_\clubsuit$} & 
    $10_{ \spadesuit }\boxplus 10_{\heartsuit }\boxplus 10_{\clubsuit}$ &
    $10_{ \spadesuit }\boxplus 11_{\heartsuit }\boxplus 00_{\clubsuit}$ &
    $10_{ \spadesuit }\boxplus 00_{\heartsuit }\boxplus 01_{\clubsuit}$  \\ 
    
    $11$ & $11_\spadesuit$ & $11_\heartsuit$  & {\color{red}$11_\clubsuit$} & 
    $11_{ \spadesuit }\boxplus 11_{\heartsuit }\boxplus 11_{\clubsuit}$ & 
    $11_{ \spadesuit }\boxplus 10_{\heartsuit }\boxplus 01_{\clubsuit}$ &
    $11_{ \spadesuit }\boxplus 01_{\heartsuit }\boxplus 00_{\clubsuit}$  \\  \hline
    \end{tabular}}
    \end{center}
    \caption{(a)~Example of a {$(4\times 5,3)$}-MDS array code constructed in~\cite{Tamo2013}. Suppose information node $\bfa^{(2)}$ (highlighted in {\color{red}red}) fails. We contact nodes $\bfa^{(0)}$, $\bfa^{(1)}$, $\bfp^{(0)}$ and $\bfp^{(1)}$ and read the contents in {\color{blue}blue}. Here, the skip cost is $4\times 1 = 4$. 
    (b)~Example of a {$(4\times 6,3)$}-MDS array code described by Construction~A with $m=2$ (see Section~\ref{sec:zigzag}). Suppose information node $\bfa^{(2)}$ (highlighted in {\color{red}red}) fails. We contact nodes $\bfa^{(0)}$, $\bfa^{(1)}$, $\bfp^{(0)}$ and $\bfp^{(1)}$ and read the contents in {\color{blue}blue}. Here, the skip cost is zero.
    Note that we use $\bfx_{i}$ to represent the information symbol $a_\bfx^{(i)}$, while the `sum' $\bfx_i\boxplus \bfy_j\boxplus \bfz_k$ indicates that the corresponding codesymbol is a linear combination of $a_\bfx^{(i)}$, $a_\bfy^{(j)}$, and $a_\bfz^{(k)}$. %\hm{Use plus.}
    %\vspace{-5mm}
    }\label{fig:zigzag}
\end{figure}

\section{Problem Statement}

For integers $i$ , $j$ with $i<j$, we let $[i,j]$ and $[i]$ denote the sets of integers $\{i,i+1,i+2,\ldots, j\}$ and $\{1,2,\ldots, i\}$, respectively. Furthermore, $\FF_q$ denotes the finite field with $q$.

Throughout this paper, we consider {\em array codes}, where each codeword is represented as an $M\times N$-array over $\FF_q$, whose rows and columns are indexed by $[M]$ and $[N]$, respectively.
Specifically, each codeword is of the form $\bfC = \left[\bfb^{(1)}, \bfb^{(2)},\ldots, \bfb^{(N)}\right]$, 
where $\bfb^{(j)}=\left(b_1^{(j)},\ldots, b_M^{(j)}\right)$ belongs to $\FF_q^M$.
Here, we say that we have $N$ {\em nodes} with each node storing $M$ {\em packets} or {\em code symbols}.

If the codewords form an $\FF_q$-linear subspace of dimension $Mk$, we say that the code is an {\em $(M\times N, k)$-array code}. 
Consider any codeword $\bfC$. If we can recover all $N$ columns of $\bfC$ from any subset of $k$ columns, we say the code is {\em maximum distance separable (MDS)}.

\begin{comment}
It was proved in \cite{dimakis2010network} that for an $[n, k]$-MDS code with code
length $n$ and dimension $k$, the recovery of a single failed node
from any $d$ helper nodes should download at least a fraction $\frac{1}{d-k+1}$
of the data stored in each of the helper nodes, i.e., the repair
bandwidth $\gamma(d)$ satisfies
\begin{equation}\label{eq:repair}
     \gamma(d)\geq \frac{d}{d-k+1}N
\end{equation}    
\end{comment}

When a node containing $\bfb$ fails, we contact certain {\em helper nodes} $\cH$ and download a certain portion of their contents to {\em repair} the failed node, that is, to recover $\bfb$. 
The amount of information downloaded from the helper nodes is called the {\em repair bandwidth}.
For an $(M\times N, k)$-array code, \cite{dimakis2010network} demonstrated that if $|\cH|=d$, then it is necessary to download at least $1/(d-k+1)$ fraction of the contents stored on each helper node.
This lower bound is commonly referred to as the {\em cutset bound} and most work seeks to achieve this bound.

In addition to this bound, this paper considers the {\em repair-by-transfer} framework, where the content downloaded is precisely the information read from each helper node. Besides minimizing the amount of information read, we introduce another metric, called {\em skip cost}.

\begin{definition}[Skip Cost]
Consider a tuple $\bfb=(b_1,b_2,\ldots,b_M)$ and 
suppose that we read $t$ packets $\Read=\{b_{i_1},\ldots, b_{i_t}\}$ where $i_1<\cdots < i_t$.
We define the {\em skip cost} of $\Read$ with respect to $\bfb$ is $\cost(\Read,\bfb)\triangleq i_t-i_1-(t-1)$. 
So, if the indices in $\Read$ are consecutive, then its skip cost is zero, which is clearly optimal.

An array code has a repair scheme with {\em skip cost $\sigma$} if for any codeword node $\bfb$,
there exists a collection of helper nodes $\cH$ and corresponding reads $\Read_\bfh~(\bfh\in\cH)$ such that
\begin{enumerate}[(i)]
    \item we can determine $\bfb$ from $\bigcup_{\bfh\in\cH} \Read_h$, and
    \item the total skip cost is at most $\sigma$, that is, $\sum_{\bfh\in\cH} \cost(\Read_h,\bfh) \le \sigma$.
\end{enumerate}
\end{definition}

In this paper, we consider two classes of array codes with good repair-by-transfer schemes: {\em zigzag codes} and {\em fractional repetition codes}, proposed in~\cite{Tamo2013} and~\cite{elrouayheb2010}, respectively. 
%\hm{to be continued. talk about optimal rebuilding ratio.}

\subsection{Zigzag Codes}

Zigzag codes are a special class of MDS array codes designed by Tamo {\em et al.}~\cite{Tamo2013} that have an optimal rebuilding ratio. Specifically, these codes are $(M\times N,k)$-MDS array codes where the first $k$ columns or nodes are systematic nodes and we focus only the failure of {\em single systematic nodes}. The {\em rebuilding ratio} of a code is then given by the maximum fraction of contents downloaded from a helper node to repair any systematic node.
Following the cutset bound, when the number of helper nodes is $d$, then the rebuilding ratio is necessarily at least $1/(d-k+1)$. Formally, we have the following result from~\cite{Tamo2013}.

\begin{theorem}[{\hspace*{-0.5mm}\cite{Tamo2013}}]
    Fix $r$. For a sufficiently large field, there exists an $(M\times N, k)$-MDS array code with $N-k=r$ and an optimal rebuilding ratio $1/r$. Here, the number of helper nodes is $N-1$.
\end{theorem}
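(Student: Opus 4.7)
The plan is to follow the zigzag code construction of Tamo, Wang, and Bruck. I fix $M = r^s$ for an appropriate integer $s$, index rows by $\bfx \in \ZZ_r^s$, and declare the first $k$ of the $N = k + r$ nodes to be systematic, storing $\bigl(a_\bfx^{(i)}\bigr)_{\bfx \in \ZZ_r^s}$ for $i \in [k]$. For each parity node $p \in \{0, 1, \ldots, r-1\}$, I associate a tuple of shift vectors $\bfS_p = (\bfv_{p,1}, \ldots, \bfv_{p,k})$ in $\ZZ_r^s$ and scalars $\alpha_{p,1}, \ldots, \alpha_{p,k} \in \FF_q^\ast$, and define the $\bfx$-th row of parity $p$ as
\[
c_\bfx^{(p)} \;=\; \sum_{i=1}^{k} \alpha_{p,i}\, a_{\bfx + \bfv_{p,i}}^{(i)},
\]
with indices understood coordinatewise modulo $r$.

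Second, I would choose the shifts so that every systematic node admits a repair scheme of rebuilding ratio $1/r$. For each $i \in [k]$ the target is a size-$r^{s-1}$ set $T_i \subset \ZZ_r^s$ such that (a) the $r$ translates $\{T_i + \bfv_{p,i} : p \in [0, r-1]\}$ partition $\ZZ_r^s$, and (b) for each $j \neq i$ all translates $T_i + \bfv_{p,j}$ lie in a single size-$r^{s-1}$ subset $U_{i,j} \subset \ZZ_r^s$. To repair $\bfa^{(i)}$, read the rows $U_{i,j}$ from each systematic helper $\bfa^{(j)}$ and the rows $T_i$ from each parity helper $\bfp^{(p)}$. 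Subtracting the known systematic contributions from each parity equation leaves an equation of the form $\alpha_{p,i}\, a_{\bfx + \bfv_{p,i}}^{(i)} = \text{(known)}$, and condition (a) ensures that, as $p$ and $\bfx \in T_i$ vary, these unknowns exhaust all $r^s$ symbols of $\bfa^{(i)}$. Each of the $N-1$ helpers contributes exactly $r^{s-1} = M/r$ packets, meeting the cutset bound.

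Third, I would verify the MDS property. Any $r$ erasures partition into $|S| = s$ erased systematic nodes $S \subseteq [k]$ and $r - s$ erased parity nodes, so decoding reduces to solving a linear system in the unknowns $\{a_\bfy^{(j)} : j \in S,\, \bfy \in \ZZ_r^s\}$ obtained from the $s$ surviving parity equations after subtracting known systematic contributions. The structural observation is that this system decomposes into disjoint $s \times s$ blocks, one per orbit of the shift action restricted to $S$, whose coefficient matrices are built from the $\alpha_{p,i}$. Each block determinant is a non-trivial polynomial in the $\alpha_{p,i}$; applying Schwartz--Zippel (or, as in the original paper, choosing the $\alpha_{p,i}$ as powers of a primitive element of a large extension so that the block determinants become Vandermonde-type expressions) shows that a single coefficient assignment makes every invertibility constraint hold simultaneously once $|\FF_q|$ exceeds a bound depending on $M$, $N$, and $k$. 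This MDS verification is the main obstacle: the repair analysis is a clean combinatorial calculation once the shift conditions (a) and (b) are met, but guaranteeing that one coefficient choice handles all $\binom{N}{k}$ sub-decoders requires careful polynomial bookkeeping and is exactly what forces the ``sufficiently large field'' hypothesis in the statement.
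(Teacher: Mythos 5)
Your proposal correctly reconstructs the Tamo--Wang--Bruck zigzag construction that this theorem is quoting; the paper itself supplies no proof here, simply crediting~\cite{Tamo2013}, though Appendix~A runs an argument of exactly this type for the paper's own Construction~A, and that is the natural yardstick for comparison. Two points. First, you state conditions~(a) and~(b) as design targets without exhibiting shifts and sets that meet them; the canonical instantiation takes the exponent equal to $k$, $\bfv_{p,i}=p\,\bfe_i$, and $T_i=\{\bfx : x_i=0\}$, whereupon~(a) holds because the slices $\{\bfx : x_i=p\}$ partition the index cube while~(b) holds with $U_{i,j}=T_i$ since shifting along $\bfe_j$ with $j\ne i$ fixes $T_i$ (indeed, a size count shows your~(b) already forces $T_i+\bfv_{p,j}=U_{i,j}$ for every $p$, so it is exactly the TWB condition, not a relaxation). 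Second, and more substantively, the assertion that the decoding system ``decomposes into disjoint $s\times s$ blocks, one per orbit of the shift action'' is an unsupported leap and is not how either~\cite{Tamo2013} or the paper's Appendix~A argues. Appendix~A treats $\det\bfD_{T^1,T^2}$ from~\eqref{PCmatrix} as a \emph{single} polynomial in the indeterminates $x_{i,j}$, identifies the leading monomial $\prod_z x_{i_z j_z}^p$ whose coefficient $\det\bigl(\prod_z \bfP_{i_z j_z}\bigr)$ is $\pm 1$, bounds each exponent $y_{i,j}$ by a binomial-coefficient identity, and then invokes Combinatorial Nullstellensatz to obtain a field-size threshold; no block structure is exploited. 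A simultaneous-diagonalization route is feasible since the permutation matrices commute, but their eigenvalues lie in an extension containing $r$-th roots of unity, so transferring nonvanishing back to $\FF_q$ needs additional care, and ``orbit'' is in any case the wrong organizing notion --- it is common eigenspaces (Fourier modes of $\ZZ_r^s$), not orbits of the shift action, that would furnish such blocks. Either carry out that diagonalization carefully or replace the paragraph with the direct polynomial bookkeeping of Appendix~A.
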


Now, as we see in Section~\ref{sec:zigzag} (see Remark~\ref{rem:tamo-cost}), the repair scheme for these codes incurs a significant skip cost. Hence, in the same section, we design a new class of zigzag codes that incurs zero skip cost. Unfortunately, our array codes suffer a reduction in coding rates. Nevertheless, our codes retain valuable properties such as optimal rebuilding ratio and optimal update%
\footnote{As defined in~\cite{Tamo2013}, a code has optimal update if whenever an information symbol is altered, only the information symbol itself and one symbol from each parity node are updated.}.

\subsection{Fractional Repetition Codes}
\label{sec:prelim-fr}

\begin{figure}[!t]
\footnotesize
\begin{center}
    \noindent(a) $(4\times 14)$-array code with locality two and  skip cost two.
    \vspace{1mm}
    
 %\footnotesize
    \begin{tabular}{|c|c|c|c|c|c| c|c|c|c|c|c| c|c|c|c|c|}
    \hline
    
    1 & 1 & 1 & 1 &
    1 & 1 & 1 & 5 &
    3 & 3 & {\color{red}2} & 7 &
    3 & 3 \\ 
    
    {\color{blue}2} & 2 & 2 & 3 &
    {\color{blue}6} & 4 & 4 & 6 &
    4 & 4 & {\color{red}4} & 4 &
    2 & 2 \\ 

    3 & 5 & 7 & 5 &
    3 & 5 & 7 & 7 &
    7 & 5 & {\color{red}6} & 5 &
    7 & 5 \\

    {\color{blue}4} & 6 & 8 & 7 &
    {\color{blue}8} & 8 & 6 & 8 &
    8 & 6 & {\color{red}8} & 2 &
    6 & 8 \\
     \hline
    \end{tabular}
\end{center}

    \vspace{2mm}
     \begin{center}
    \noindent(b) $(4\times 14)$-array code with locality two and skip cost zero.
    \vspace{1mm}
    
     \setlength\tabcolsep{4pt} 
    \begin{tabular}{|c|c|c|c|c|c| c|c|c|c|c|c| c|c|c|c|c|}
    \hline
    \multicolumn{8}{|c|}{$\cB_1$} & 
    \multicolumn{6}{ c|}{$\cB_2$} \\\hline
    {\color{red}$1_0$} & {\color{blue}$1_0$} & $1_0$ & $1_0$ &
    $1_1$ & $1_1$ & $1_1$ & $1_1$ &
    $1_0$ & $1_0$ & $1_0$ & $2_0$ & $2_0$ & $3_0$ \\

    {\color{red}$2_0$} & {\color{blue}$2_0$} & $2_1$ & $2_1$ &
    $2_0$ & $2_0$ & $1_1$ & $2_1$ &
    $1_1$ & $1_1$ & $1_1$ & $2_1$ & $2_1$ & $3_1$ \\
    
    {\color{red}$3_0$} & $3_1$ & $3_0$ & $3_1$ &
    $3_0$ & $3_1$ & {\color{blue}$3_0$} & $3_1$ &
    $2_0$ & $3_0$ & $4_0$ & $3_0$ & $4_0$ & $4_0$ \\

    {\color{red}$4_0$} & $4_1$ & $4_1$ & $4_0$ &
    $4_1$ & $4_0$ & {\color{blue}$4_0$} & $4_1$ &
    $2_1$ & $3_1$ & $4_1$ & $3_1$ & $4_1$ & $4_1$ \\
    
     \hline
    \end{tabular}
 \end{center} %\footnotesize
     
    \caption{(a)~Example of a {$(4\times 14)$}-array code built using a $\sqs(8)$. 
    Suppose the node highlighted in {\color{red}red} fails. We read the contents in {\color{blue}blue} and the skip cost is $2\times 1 = 2$. 
    (b)~Example of a {($4\times 14$)}-array code described by Construction~D with $|V|=8$ (see Section~\ref{sec:fractional}). Here, due to the space constrained,  we use the notation $a_b$ to denote $(a,b)$. }Suppose the node highlighted in {\color{red}red} fails. We read the contents in {\color{blue}blue} and the skip cost is zero.
    %\vspace{-5mm}
    \label{fig:sqs}
\end{figure}

After the introduction of the repair-by-transfer framework in~\cite{shah2012}, El~Rouayheb and Ramchandran~\cite{elrouayheb2010} introduced a novel class of array codes that uses table-based repair. 
Known as DRESS (Distributed Replication based Exact Simple Storage) codes, these codes comprise two main components: an outer MDS code and an inner repetition code known as {\em fractional repetition} code. Typically, fractional repetition codes are constructed using combinatorial designs, and in-depth discussions can be found in~\cite{elrouayheb2010,ernvall2012existence, olmez2016fractional, zhu2014general, silberstein2015optimal}.

Here, we formally describe how a set system can be used to construct an array code.
Let $V$ be a set of {\em points} and $\cB$ be a collection of subsets defined over $V$. 
For $\bfb\in\cB$, we refer to $\bfb$ as a {\em block}.
Then $(V,\cB)$ is known as a {\em set system}.

Suppose further that $|V|=n$, $|\cB|=N$ and $|\bfb|=M$ for all $\bfb\in\cB$.
In addition, we also consider an $[n,k]$-MDS outer code%
\footnote{In the rest of the paper,  we omit specifying the dimension $k$ of the outer code, as it does not influence the repair costs.}.
In a DRESS code, a file ${\bfx}$ is broken up into $n$ {\em coded packets} and stored in $|\cB|=N$ nodes such that any $k$ packets suffices to restore ${\bfx}$. 
Here, the packets of the file ${\bfx}$ and the storage nodes are indexed by $V$ and $\cB$, respectively.
Then for a block $\bfb\in \cB$, we store the packets $\{{\bfx}_{v}: v\in\bfb\}$ in the node $\bfb$ and 
we obtain an $(M\times N)$-array code.
For example, the $(4\times 14)$-array codes displayed in Fig.~\ref{fig:sqs}
are constructed using set systems with eight points, $14$ blocks, and with constant block size four. 

As before, we are interested in the event when a single node fails.
In the case of DRESS codes, the repair process for the failed node or block involves downloading the coded packets replicated in other nodes. To further reduce repair latency, we limit the number of helper nodes, imposing locality requirements on the repair process.

Now, given certain locality requirements, we observe that the {\em order} in which the points or packets are arranged in each node has an impact on skip cost. 
In Fig.~\ref{fig:sqs}(a), when the node $(2,4,6,8)$ fails, a skip cost of two is necessary to recover the points $2$, $4$, $6$, and $8$ from two helper nodes. In contrast, with the arrangement of points in  Fig.~\ref{fig:sqs}(b), we can achieve both locality two and skip cost zero, regardless of the failed node.
For convenience, we say that a set system has {\em locality $d$} and {\em skip cost $\sigma$} if the corresponding array code admits a repair scheme that repairs with at most $d$ helpers and skip cost $\sigma$.

Next, we observe that both set systems in Figure~\ref{fig:sqs} are Steiner quadruple systems (SQS) of order eight, and these designs shall be our main object of study%
\footnote{A SQS is a special class of three-designs as each triple appears exactly once among all blocks. Now, a two-design is defined to be a set system where every pair appears exactly once among all blocks. In this case, during recovery, at most one point can be downloaded from every helper node. Then the skip cost is trivially zero. Therefore, we study the next non-trivial case: SQS. }.

\begin{definition}
    A {\em Steiner quadruple systems} of order $v$, or $\sqs(v)$, is a set system $(V,\cB)$ such that
    \begin{enumerate}[(i)]
    \item $|V|=v$ and $|\bfb|=4$ for all $\bfb\in\cB$.
    \item Every subset $T$ of $V$ with $|T|=3$ is contained in exactly one block in $\cB$.
    \end{enumerate}
\end{definition}

Hanani then demonstrated the existence of such quadruple systems for all admissible parameters.

\begin{theorem}[Hanani~\cite{hanani1960quadruple}]\label{thm:hanani}
A $\sqs(v)$ exists if and only if $v\ge 4$ and $v \equiv 2$ or $4\pmod{6}$.
\end{theorem}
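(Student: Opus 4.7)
I would separate the argument into necessity and sufficiency. For necessity, a double counting argument suffices. Each block contains $\binom{4}{3}=4$ triples and every triple lies in exactly one block, so the number of blocks is $v(v-1)(v-2)/24$; counting blocks through a fixed pair $\{x,y\}$ gives $(v-2)/2$, forcing $v$ to be even; counting blocks through a fixed point gives $(v-1)(v-2)/6$, forcing $v \not\equiv 0 \pmod 3$. Together these yield $v \equiv 2$ or $4 \pmod 6$, and $v \ge 4$ is automatic.

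For sufficiency, I would follow Hanani's inductive strategy. First, exhibit explicit base cases: $\sqs(4)$ is the single block $\{1,2,3,4\}$, $\sqs(8)$ may be taken as the affine $3$-flats of $\mathbb{F}_2^3$, and $\sqs(10)$ and $\sqs(14)$ require direct constructions (e.g.\ via cyclic automorphism groups acting on $\mathbb{Z}_v$ with a short list of base blocks). Next, establish two recursive operations on admissible orders: the \emph{doubling} $v \to 2v$, which glues two copies of $\sqs(v)$ on $V \cup V'$ and covers cross-triples through mixed blocks determined by a $1$-factorization of $K_v$ (available since $v$ is even); and the \emph{tripling} $v \to 3v-2$, built from resolvable pairwise balanced designs. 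Finally, an arithmetic check shows that iterating these recursions from the base cases $\{4,8,10,14\}$ reaches every admissible $v$, possibly with the help of auxiliary constructions (such as $v \to 2v-2$) to fill in residue classes modulo $12$ that cannot otherwise be reached.

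The main obstacle is the verification step inside each recursion: proving that every triple on the enlarged point set lies in exactly one block. For the doubling, intra-half triples are absorbed by the inner $\sqs(v)$ copies, but mixed triples must be covered exactly once via a carefully coordinated choice of cross-blocks using the chosen $1$-factorization, requiring case analysis on the split of the three points between $V$ and $V'$. A secondary subtlety is the arithmetic closure: one must check that the four base cases together with $v \to 2v$ and $v \to 3v-2$ really do reach every admissible residue class, and supply additional constructions whenever they do not. This closure argument, rather than any single recursion, is what makes the full proof genuinely nontrivial and is the reason Hanani's original paper introduces several further constructions beyond the two highlighted above.
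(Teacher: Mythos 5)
The paper does not prove this statement at all: it is imported verbatim from Hanani's 1960 paper as a known classical result, so there is no in-paper argument to compare against. Judged on its own terms, your necessity direction is complete and correct: counting triples through a fixed pair gives $(v-2)/2$ blocks, forcing $v$ even, and counting triples through a fixed point gives $(v-1)(v-2)/6$ blocks, forcing $v\not\equiv 0\pmod 3$; together these give exactly $v\equiv 2$ or $4\pmod 6$. Your sufficiency direction is an accurate roadmap of Hanani's strategy, and it aligns nicely with the machinery this paper actually uses later (Construction~D is a skip-cost-aware doubling and Construction~E is the $3v-2$ recursion), but as written it is a plan rather than a proof.

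The concrete gap is the one you yourself flag, and it is not minor: the closure of $\{4,8,10,14\}$ under $v\mapsto 2v$ and $v\mapsto 3v-2$ misses infinitely many admissible orders --- for instance $26$, $34$, and $38$ are not of the form $2v$ or $3v-2$ for any admissible $v$ (precisely the orders this paper must handle by separate difference-method constructions in Theorem~3). Hanani's actual proof requires several further recursions (such as $v\mapsto 3v-8$, $v\mapsto 4v-6$, $v\mapsto 12v-10$) together with additional small designs, and the verification that each recursion covers every triple exactly once is itself a substantial case analysis that you defer. So the sufficiency half should be regarded as an outline with acknowledged missing constructions rather than a completed argument; for the purposes of this paper, citing Hanani (as the authors do) is the appropriate resolution.
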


In Section~\ref{sec:fractional}, we show that at least two thirds of these quadruple systems have locality two and skip cost zero.
Specifically, we show the following result.

\begin{theorem}\label{thm:sqs-all}
There exists a $\sqs(v)$ with locality two and skip cost zero if
$v\ge 4$ and $v\equiv 4,8,10,16,20,22,28,\text{ or }32\pmod{36}$
and $v\in\{14,26,34,38\}$.
\end{theorem}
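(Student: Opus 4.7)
The plan is to establish the theorem by combining a parametric construction that handles each of the eight listed residue classes mod~$36$ with four ad-hoc constructions for the exceptional values $v\in\{14,26,34,38\}$. I first recast the repair condition combinatorially: when a block $\bfb=(b_1,b_2,b_3,b_4)$ fails, locality two and skip cost zero together demand a partition of $\{b_1,b_2,b_3,b_4\}$ into two pairs, each of which appears in consecutive positions of some other block. Since any two distinct blocks of an $\sqs$ meet in at most two points, each such consecutive pair is precisely the intersection of $\bfb$ with a second block, so the whole property is encoded purely by the choice of orderings on the blocks of $\cB$.

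The central ingredient is the parametric family (referred to in the introduction as Construction~D) that generalizes the $v=8$ template of Figure~\ref{fig:sqs}(b). I would partition the point set $V$ into two halves $V_0,V_1$ and form two kinds of blocks: \emph{internal} blocks contained in one half, which themselves form smaller Steiner systems and inherit good orderings recursively, and \emph{bridging} blocks meeting each half in exactly two points. Each bridging block is ordered so that the points from $V_0$ occupy positions~$1,2$ and those from $V_1$ occupy positions~$3,4$, so the natural pair split sits in consecutive positions. The matching between pairs in $V_0$ and pairs in $V_1$ is controlled by a pair of one-factorizations, chosen so that the ingredients can be simultaneously instantiated for every $v$ of the listed residues.

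The main obstacle will be verifying the \emph{pair-covering condition}: every pair $\{x,y\}$ designated as a consecutive pair inside some block must also appear as a consecutive pair inside at least one other block, uniformly across all blocks of the system. This amounts to a coordinated design-theoretic requirement on the one-factorizations used to form the bridging blocks and on the recursive orderings of the internal blocks, and most of the combinatorial labor goes into showing that such a coordinated choice exists in each covered residue class. The failure of this balance in the residues $\{2,14,26,34\}\pmod{36}$ is plausibly the reason those classes fall outside the parametric family.

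Finally, for each of the four exceptional values $v\in\{14,26,34,38\}$, one of which lies in each of the uncovered residues, I would furnish an explicit ordered $\sqs(v)$, either by a tabular design in the spirit of Figure~\ref{fig:sqs}(b) or by a short computer search, and verify the consecutive-pair condition by direct inspection. The parametric construction together with these sporadic instances then yields every $v$ claimed in the theorem.
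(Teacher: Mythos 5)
Your outline (one parametric family plus a handful of sporadic cases) sounds reasonable, but a single doubling-type construction cannot cover all eight listed residue classes, and this is the main gap. Any construction that builds an ordered $\sqs(2w)$ out of an $\sqs(w)$ can only reach $v\equiv 4,8\pmod{12}$, since Hanani guarantees $\sqs(w)$ exactly for $w\equiv 2,4\pmod 6$; reducing mod $36$ this gives $\{4,8,16,20,28,32\}$ and \emph{misses} $10$ and $22\pmod{36}$, which the theorem also claims. No choice of one-factorizations changes the admissible input sizes. The paper closes this gap with a \emph{second} recursive construction (Construction~E, the $(3v-2)$-construction), which produces an ordered $\sqs(3w-2)$ from an arbitrary $\sqs(w)$; since $3w-2\equiv 4$ or $10\pmod{18}$ when $w\equiv 2$ or $4\pmod 6$, this contributes the residues $\{4,10,22,28\}\pmod{36}$, and the union of the two constructions is precisely the eight listed classes. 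Your proof would need to supply something of this second kind; as written it cannot produce $v\equiv 10$ or $22\pmod{36}$.

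Two further points. First, your sketch of the doubling step deviates from what actually works: you want the internal copies of $\sqs(w)$ on each half to ``inherit good orderings recursively,'' but that forces a base case, and $\sqs(4)$ (a single block) cannot have locality two. The paper's Construction~D avoids this entirely: the ordering of \emph{every} block of the doubled system is prescribed up front, and the repair of a $\cB_1$-block $\big((v_1,i_1),(v_2,i_2),(v_3,i_3),(v_4,i_4)\big)$ reads positions $1,2$ from $\big((v_1,i_1),(v_2,i_2),(v_3,1-i_3),(v_4,1-i_4)\big)$ and positions $3,4$ from $\big((v_1,1-i_1),(v_2,1-i_2),(v_3,i_3),(v_4,i_4)\big)$ — no one-factorizations, and no global ``coordinated design-theoretic requirement'' to verify. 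The transversal blocks are ordered $\big((v_1,0),(v_1,1),(v_2,0),(v_2,1)\big)$, interleaving the two halves, so they also do not match your description of ``$V_0$-points in positions $1,2$.'' Second, for the sporadic orders the paper does not rely on ad-hoc tables or search alone: $\sqs(26),\sqs(34),\sqs(38)$ are obtained by a method-of-differences criterion (Proposition~\ref{prop:sqs-differences}, a ``repeated adjacent pairs'' condition on base blocks over a cyclic group), with only $\sqs(14)$ given by an explicit list. You should either cite such a structured sufficient condition or explicitly present and verify the four sporadic systems.
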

Therefore, we have an $\sqs(v)$ with locality two and skip cost zero for all admissible parameters at most 50. 
We conjecture that a $\sqs(v)$ with locality two and skip cost zero exists for all admissible parameters.
%\hm{We may update this paragraph if we obtain more results.}
%and a combinatorial design is a set system with certain intersection properties.

\begin{comment}
\begin{itemize}
    \item When we treat each block as an {\em unordered} set, we refer to the pair $(V,\cB)$ as an {\em unordered set system}.
    \item When the order in each block matters, i.e. $\bfb$ is a tuple, we refer to the pair $(V,\cB)$ as an {\em ordered set system}. Here, we label the indices of $\bfb$ with $[|\bfb|]$. In other words, $\bfb=(v_1,v_2,\ldots, v_{|\bfb|})$.
\end{itemize}
\end{comment}

\begin{comment}
 A $t$-$(N,M,\lambda)$-{\em design} is a unordered set system $(V,\cB)$ such that
\begin{enumerate}[(i)]
\item $|V|=n$ and $|\bfb|=M$ for all $\bfb\in\cB$.
\item For all $t$-subsets $T$ of $V$, the subset $T$ appears exactly $\lambda$ times in all blocks in $\cB$.   
\end{enumerate}
\end{comment}

%\section{Zigzag Codes {\em \'a la} Tamo-Wang-Bruck}
\section{Zigzag Codes with Zero Skip Cost}
\label{sec:zigzag}

In this section, we present three classes of MDS array codes with a repair scheme that incurs zero skip cost. Our initial construction, Construction~A, serves to demonstrate the main ideas underlying the placement of parity symbols. Subsequently, we refine our approach. Specifically, we achieve a higher code rate in Construction~B and later eliminate the dependency of number of information symbols and sub-packetization level in Construction~C.

Although our placement of parity code symbols shares similarities with that of Tamo-Wang-Bruck, 
specific differences enable us to achieve zero skip cost during the repair process.
\vspace{1mm}

\noindent{\bf Construction A.} 
Let $m\geq2$ and set $k=m+1$. 
We present an $(M\times N,k)$-MDS array code with $M=2^m$ packets and $N=2k=2(m+1)$ nodes.
Of these $N$ nodes, $k=m+1$ of them are {\em systematic} nodes $\bfa^{(0)},\bfa^{(1)}\ldots, \bfa^{(m)}$, while the remaining $k=m+1$ are parity nodes $\bfp^{(0)}, \bfp^{(1)},\ldots, \bfp^{(m)}$.
Now, instead of indexing the rows with $[M]$, we index them using the $M=2^m$ bitstrings in $\FF_2^m$. 
Here, the rows are arranged in lexicographic order.
Moreover, for convenience, we introduce the following notation for certain vectors in $\FF_2^m$:
\begin{itemize}
    \item $\vzero$ denotes the zero vector;
    \item For $i\in [m]$, we use $\bfe_{i}$ to denote the vector whose $i$-th entry is one and other entries are $0$.
    \item For $i\in [m]$, we use $\bfd_{i}$ to denote the vector whose first $i$ entries are one and other entries are $0$. In other words, $\bfd_i=\bfe_1+\cdots+\bfe_i$.
    \item Finally, we use $\cU$ and $\cL$ denotes the set of bitstrings starting with zero and one, respectively. Given the lexicographic order, we observe that the first and second halves of the array are indexed by strings in $\cU$ and $\cL$, respectively.
\end{itemize}

\vspace{1mm}
\noindent\textit{Contents of systematic columns}: For $j\in[0,m]$, we simply set $\bfa^{(j)}=\left(a^{(j)}_{\bfx}\right)_{\bfx\in\FF_2^m}$.

\vspace{1mm}

\noindent\textit{Contents of parity columns}: For $j\in[0,m]$, we set $\bfp^{(j)}=\left(p^{(j)}_{\bfx}\right)_{\bfx\in\FF_2^m}$ and our task is to determine the parity sum $p^{(j)}_{\bfx}$ in terms of the $a^{(j)}_\bfx$'s. 
Now, in what follows, each parity sum in $\bfp^{(j)}$ is a linear combination $m+1$ information symols and is determined by an $(m+1)$-tuple $\bfS_j$. %Specifically, we do the following.

\begin{itemize}\itemindent=-10pt
\item Set $\bfS_0$ to be $(\vzero,\ldots, \vzero)$ and 
for $j\in [m]$, we set $S_j=\left(\vzero, \bfe_{j+1},\bfe_{j+2},\ldots,\bfe_m,\bfd_1,\bfd_2,\ldots,\bfd_j\right)$.
\item Now, we use the tuple $\bfS_j$ to generate a parity column $\bfp^{(j)}$. 
If $\bfS_j=(\bfv_0,\ldots, \bfv_m)$ and we consider row $\bfx$, we set $\bfu_i=\bfx+\bfv_i$ and define $p^{(j)}_{\bfx} = \sum^{m}_{i=0}\alpha^{(i,j)}_{\bfx}a^{(i)}_{\bfu_i}$ for some choice of coefficients $\alpha^{(i,j)}_{\bfx}\in\FF_q$ 
\item For example, for the first parity column $\bfp^{(0)}$, we have that $\bfS_0=(\vzero,\ldots, \vzero)$.
Then for row $\bfx$, the parity sum is $p^{(0)}_{\bfx}=\alpha^{(0,0)}_{\bfx}a^{(0)}_{\bfx}+\cdots+\alpha^{(m,0)}_{\bfx}a^{(m)}_{\bfx}$.
In other words, $p^{(0)}_{\bfx}$ is a linear combination of the $(m+1)$ information symbols in the same row.
\end{itemize}

\vspace{1mm}
\noindent\textit{Repair of Information Node $s$ for $s\in [0,m]$}:
\begin{itemize}\itemindent=-10pt
\item If $s\in [m]$, set helper nodes to be $\cH = \{\bfa^{(i)}: i\in [0,m]\setminus\{s\}\}\cup\{\bfp^{(m-s)},\bfp^{(m-s+1)}\}$. 
\begin{itemize}\itemindent=-20pt
    \item From $\bfa^{(i)}$ with $i<s$, we read $\Read_{\bfa^{(i)}}=\{a^{(i)}_\bfx:\bfx\in\cU\}$.
    \item From $\bfa^{(i)}$ with $i>s$, we read $\Read_{\bfa^{(i)}}=\{a^{(i)}_\bfx:\bfx\in\cL\}$.
    \item From $\bfp^{(i)}$ with $i\in\{m-s,m-s+1\}$, we read $\Read_{\bfp^{(i)}}=\{p^{(i)}_\bfx:\bfx\in\cU\}$.
\end{itemize}
\item If $s=0$, set helper nodes to be $\cH = \{\bfa^{(i)}: i\in [m]\}\cup\{\bfp^{(0)},\bfp^{(m)}\}$. 
\begin{itemize}\itemindent=-20pt
    \item From $\bfa^{(i)}$ with $i\in [m]$, we read $\Read_{\bfa^{(i)}}=\{a^{(i)}_\bfx:\bfx\in\cU\}$.
    \item From $\bfp^{(0)}$, we read $\Read_{\bfp^{(0)}}=\{p^{(0)}_\bfx:\bfx\in\cU\}$.
    \item From $\bfp^{(m)}$, we read $\Read_{\bfp^{(m)}}=\{p^{(m)}_\bfx:\bfx\in\cL\}$.
\end{itemize}
\end{itemize}

\begin{theorem}\label{thm:zigzag1}
The repair scheme for Construction~A is correct. 
That is, for $s\in [0,m]$, we are able to determine $\bfa^{(s)}$ from all reads.
Moreover, the repair scheme has zero skip cost and optimal rebuilding ratio $1/2$.
\end{theorem}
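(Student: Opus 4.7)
The plan is to verify the three claims separately: optimality of the rebuilding ratio, zero skip cost, and correctness of the repair. The first two reduce to bookkeeping with the lexicographic order on $\FF_2^m$, while correctness requires tracking how the offset tuples $\bfS_{m-s}$ and $\bfS_{m-s+1}$ (or $\bfS_0$ and $\bfS_m$ when $s=0$) interact with the two halves $\cU$ and $\cL$.

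For the rebuilding ratio, the helper set has $|\cH|=m+2$ nodes, and from each we read exactly $|\cU|=|\cL|=2^{m-1}$ of the $M=2^m$ packets, a fraction $1/2$; the cutset bound $1/(d-k+1)=1/(m+2-(m+1)+1)=1/2$ is thus attained. For zero skip cost, $\cU$ is precisely the first $2^{m-1}$ bitstrings in lexicographic order and $\cL$ is the last $2^{m-1}$, so the indices read from every helper form a contiguous block.

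For correctness I would show that for every $\bfy\in\FF_2^m$ exactly one of the two parity rows that is read isolates $a^{(s)}_\bfy$, with every other term in that equation already available from the systematic reads. The central observation is the effect of each offset on the leading bit. When $1\le s\le m-1$, the $s$-th entry of $\bfS_{m-s}$ is $\bfe_m$ (preserves bit~$1$); every earlier entry is some $\bfe_{m-s+i}$ with $m-s+i\ge 2$ (still preserves bit~$1$); every later entry is some $\bfd_{i-s}$, which contains $\bfe_1$ and so flips bit~$1$. Similarly, the $s$-th entry of $\bfS_{m-s+1}$ is $\bfd_1=\bfe_1$ (flips bit~$1$), with the earlier $\bfe$-entries preserving bit~$1$ and the later $\bfd$-entries flipping it. Consequently, for $\bfx\in\cU$, in $p^{(m-s)}_\bfx$ every $\bfu_i$ with $i\le s$ lies in $\cU$ and every $\bfu_i$ with $i>s$ lies in $\cL$, which exactly matches the read pattern from $\bfa^{(i)}$; similarly $p^{(m-s+1)}_\bfx$ places $\bfu_s=\bfx+\bfe_1$ in $\cL$ while keeping the other $\bfu_i$ in the correct half. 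As $\bfx$ sweeps $\cU$, the offsets $\bfx+\bfe_m$ and $\bfx+\bfe_1$ sweep $\cU$ and $\cL$ respectively, so $\bfa^{(s)}$ is fully recovered. The corner $s=m$ collapses $\bfS_{m-s}$ to $\bfS_0$, making every $\bfu_i=\bfx$ so that $p^{(0)}_\bfx$ directly yields $a^{(m)}_\bfx$ on $\cU$; the corner $s=0$ is handled by $\bfp^{(0)}$ (which recovers $\bfa^{(0)}|_\cU$ immediately) together with $\bfp^{(m)}$, whose nonzero offsets are all $\bfd_i$'s and hence flip bit~$1$, sending every $\bfu_i$ with $i\ge 1$ into $\cU$ where it has been read.

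The main obstacle is precisely this first-bit bookkeeping: checking case by case, depending on whether a helper index is below, equal to, or above $s$, that the corresponding offset in $\bfS_{m-s}$ or $\bfS_{m-s+1}$ lands $\bfu_i$ in the half that was actually read from $\bfa^{(i)}$. Once this alignment is verified, solving for $a^{(s)}_\bfy$ reduces to dividing by a single nonzero coefficient $\alpha^{(s,\cdot)}_\bfx$ in one linear equation, and the repair scheme is correct.
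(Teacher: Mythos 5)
Your proposal is correct and follows essentially the same route as the paper's proof: you verify the ratio and skip cost by direct counting, and establish correctness by tracking whether each offset in $\bfS_{m-s}$ and $\bfS_{m-s+1}$ preserves or flips the leading bit, so that the $s$-th offsets ($\bfe_m$ and $\bfd_1$) together sweep all of $\FF_2^m$ while every other offset lands $\bfu_i$ in exactly the half read from $\bfa^{(i)}$. This is precisely the paper's case analysis on $i<s$, $i=s$, $i>s$ (with the same separate treatment of $s=0$), so no further changes are needed.
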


\begin{proof}
We prove for the case $s\in[m]$ and the cases $s=0$ can be similarly proved.

Recall that the reads from $\bfp^{(m-s)}$ and $\bfp^{(m-s+1)}$ are
\begin{align*}
\Read_{\bfp^{(m-s)}}   &= \left\{ \sum_{i=0}^m \alpha_\bfx^{(i,j)} a^{(i)}_{\bfu_i}: \bfx\in \cU\right\},\\
\Read_{\bfp^{(m-s+1)}} &= \left\{ \sum_{i=0}^m \alpha_\bfx^{(i,j)} a^{(i)}_{\bfu_i}: \bfx\in \cU\right\}.    
\end{align*}
Here, let $\cI_1$ and $\cI_2$ be the information symbols involved in the reads $\Read_{\bfp^{(m-s)}}$ and $\Read_{\bfp^{(m-s+1)}}$, respectively.

For $s\in [m-1]$, we observe that 
\begin{align*}
\bfS_{m-s} &= (\vzero,\bfe_{m-s+1},\ldots, {\color{blue}\bfe_m}, \ldots, \bfd_{m-s-1}),\\
\bfS_{m-s+1} &= (\vzero,\bfe_{m-s+2},\ldots, {\color{blue}\bfd_1}, \ldots, \bfd_{m-s}),     
\end{align*}
for $s=m$, $
\bfS_{0} = (\vzero,\vzero,\ldots, \vzero,{\color{blue}\vzero})$, $
\bfS_{1} = (\vzero,\bfe_{2},\ldots,  \bfe_m,{\color{blue}\bfd_1})$,
where the highlighted positions in blue are the $s$-th entries. Using this observation, we claim the following.
\begin{itemize} 
\setlength{\leftmargin}{0pt}
\item {\em $\cI_1\cup \cI_2$ contains all information symbols $a^{(s)}_\bfx$ with $\bfx\in\FF_2^m$}. Indeed, if $\bfx\in \cU$, since $\bfx+\bfe_m\in \cU$, we have that $a^{(s)}_\bfx\in \cI_1$. Similarly, if $\bfx\in \cL$, since $\bfx+\bfd_1\in \cU$, we have that $a^{(s)}_\bfx\in \cI_2$.

\item {\em If $i<s$, then $\cI_1\cup \cI_2$ contains exactly all information symbols $a^{(i)}_\bfx$ with $\bfx\in\cU$}. Indeed, in both $\bfS_{m-s}$ and $\bfS_{m-s+1}$, the $i$-th entries belong to $\cU$. Hence, we have that $a^{(i)}_{\bfu_i}$ always belong to $\cU$.

\item {\em If $i>s$, then $\cI_1\cup \cI_2$ contains exactly all information symbols $a^{(i)}_\bfx$ with $\bfx\in\cL$}. As before, we observe that in both $\bfS_{m-s}$ and $\bfS_{m-s+1}$, the $i$-th entries belong to $\cL$. Hence, we have that $a^{(i)}_{\bfu_i}$ always belong to $\cL$.
\end{itemize}

Therefore, using the reads $\Read_{\bfa^{(i)}}~(i\ne s)$, we eliminate all information symbols $a^{(i)}_\bfx$ with $i\ne s$ from the corresponding check sums and recover $\bfa^{(s)}$.

For the case of $s=0$, the reads from $\bfp^{(0)}$ and $\bfp^{(m)}$ are
\begin{align*}
\Read_{\bfp^{(0)}}   &= \left\{ \sum_{i=0}^m \alpha_\bfx^{(i,j)} a^{(i)}_{\bfu_i}: \bfx\in \cU\right\},\\
\Read_{\bfp^{(m)}} &= \left\{ \sum_{i=0}^m \alpha_\bfx^{(i,j)} a^{(i)}_{\bfu_i}: \bfx\in \cL\right\}.    
\end{align*}
Here, let $\cI_1$ and $\cI_2$ be the information symbols involved in the reads $\Read_{\bfp^{(0)}}$ and $\Read_{\bfp^{(m)}}$, respectively.

For $s=0$, we observe that 
\begin{align*}
\bfS_{0} &= (\vzero,\vzero,\ldots, \vzero),\\
\bfS_{m} &= (\vzero, \bfd_1, \ldots, \bfd_{m-s}),     
\end{align*}
% where the highlighted positions in blue are the $s$-th entries. Using this observation, we claim the following.
\begin{itemize} 
\setlength{\leftmargin}{0pt}
\item {\em Obviously, $\cI_1\cup \cI_2$ contains all information symbols $a^{(0)}_\bfx$ with $\bfx\in\FF_2^m$}. 
% Indeed, $\bfx\in \cU \cup\bfx\in \cL$ , since $\bfx+\bfe_m\in \cU$, we have that $a^{(s)}_\bfx\in \cI_1$. Similarly, if $\bfx\in \cL$, since $\bfx+\bfd_1\in \cU$, we have that $a^{(s)}_\bfx\in \cI_2$.
\item {\em If $i>0$, then $\cI_1\cup \cI_2$ contains exactly all information symbols $a^{(i)}_\bfx$ with $\bfx\in\cU$}. Indeed, in both $\bfS_{0}$ and $\bfS_{m}$, the $i$-th entries belong to $\cU$. Hence, we have that $a^{(i)}_{\bfu_i}$ always belong to $\cU$.
\end{itemize}
Therefore, using the reads $\Read_{\bfa^{(i)}}~(i\ne s)$, we eliminate all information symbols $a^{(i)}_\bfx$ with $i\ne s$ from the corresponding check sums and recover $\bfa^{(s)}$.

Finally,  to check that skip cost is zero, we observe that we always download symbols in all rows in either $\cU$ or $\cL$, which are consecutive. This also means that the rebuilding ratio is $1/2$.
This is optimal as the number of helper nodes is $m+2=k+1$.
\end{proof}

We illustrate Construction~A with an example.

\begin{example}
    Let $m=2$. Hence, $M=4$, $N=6$ with three information nodes and three parity nodes. 
    We present the placement of the codesymbols in Fig.~\ref{fig:zigzag}(b). 
    To reduce clutter and confusion, we index the information nodes with the symbols $\{\spadesuit,\heartsuit,\clubsuit\}$ in place of $\{0,1,2\}$. Additionally, instead of writing $a^{(i)}_\bfx$, we write it as $\bfx_{i}$. In other words, $00_{\spadesuit}$ represents the information symbol $a^{(\spadesuit)}_{00}$. For the parity symbols, we use $\bfx_i \boxplus \bfy_j \boxplus \bfz_k$ to represent a linear combination of $a_\bfx^{(i)}$, $a_\bfy^{(j)}$, and $a_\bfz^{(k)}$.

    We illustrate the repair scheme assuming that the information node corresponding to $\heartsuit$ (or $s=1$) has failed.
    According to our scheme, the helper information nodes are $\spadesuit$ and $\clubsuit$, while the helper parity nodes are $\bfp^{(1)}$ and $\bfp^{(2)}$. Specifically, the reads from the parity nodes contain the following information symbols.
    \begin{itemize}
    \item From $\bfp^{(1)}$, we obtain
    \[ \cI_1 \triangleq 
    \{00_{\spadesuit }, {\color{red}01_{\heartsuit}},10_\clubsuit\} \cup
    \{01_{\spadesuit }, {\color{red}00_{\heartsuit}},11_\clubsuit\}\,. \]
    \item From $\bfp^{(2)}$, we obtain
    \[ \cI_2 \triangleq 
    \{00_{\spadesuit }, {\color{red}10_{\heartsuit}},11_\clubsuit\} \cup
    \{01_{\spadesuit }, {\color{red}11_{\heartsuit}},10_\clubsuit\}\,. \]
    \end{itemize}
    Indeed, we check that 
    \[\cI_1\cup\cI_2 = \{\bfx_{\spadesuit}: \bfx\in\cU\}\cup\{\bfx_{\heartsuit}: \bfx\in{\color{red}\FF_2^2}\}\cup\{\bfx_{\clubsuit}: \bfx\in\cL\}\,,\]
    which corroborates with the claims in the proof of Theorem~\ref{thm:zigzag1}.
    Since we download the information symbols $\{\bfx_{\spadesuit}: \bfx\in\cU\}$ and $\{\bfx_{\clubsuit}: \bfx\in\cL\}$ from $\bfa^{(0)}$ and $\bfa^{(2)}$, respectively, we recover all values in $\{\bfx_{\heartsuit}: \bfx\in{\color{red}\FF_2^2}\}$, i.e., $\bfa^{(1)}$.
    Since we download contents from rows in either $\cU$ or $\cL$, the skip cost is zero.
    In Fig.~\ref{fig:zigzag}(b), we illustrate the repair scheme of another information node corresponding to $\clubsuit$ and  the skip cost is zero too. \qedsymbol
\end{example}

It remains to choose the coefficients $\alpha^{(i,j)}_\bfx$ so as to achieve the MDS property.
As the proof is technical and similar to that in~\cite{Tamo2013}, we defer the proof of the following proposition to Appendix~\ref{app:mds}. %\hm{Wenqin, can you write the proof?}

%{\color{red}
\begin{proposition}\label{prop:mds}
Fix $m$ and set $t=\lceil\frac{k}{2}\rceil$. 
%For sufficiently large $q> 2^m\binom{k-1}{t-1}^2$,
If $q> 2^m\binom{k-1}{t-1}^2$, then
there exist coefficients $\alpha^{(i,j)}_\bfx$ that belong to $\FF_q$ for $i,j\in [0,m]$ and $\bfx\in\FF_2^m$ 
such that we can recover $\bfa^{(i)}~(i\in[0,m])$ from any $k=m+1$ nodes.
\end{proposition}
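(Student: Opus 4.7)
The plan is to follow the template of~\cite{Tamo2013}: reformulate MDS as non-singularity of a family of structured block matrices, show that each of the resulting determinants is a nonzero polynomial in the unknown coefficients $\alpha^{(i,j)}_\bfx$, and then invoke a Schwartz--Zippel union bound to locate a valid assignment over any field of the claimed size.

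First I would reformulate the problem. MDS is equivalent to correctability of any $N-k=k$ erasures. If the erasure pattern consists of $s$ information nodes indexed by $I\subseteq[0,m]$ together with $k-s$ parity nodes, leaving $s$ parity nodes indexed by $J\subseteq[0,m]$, then after cancelling the known information symbols from the surviving parity symbols one obtains an $sM\times sM$ linear system $A_{I,J}\,\bfu=\bfc$ in the unknowns $\bfu\in\FF_q^{sM}$ from the erased information nodes. The matrix $A_{I,J}$ has a natural $s\times s$ block form whose $(j,i)$-block is $D^{(i,j)}P^{(i,j)}$: here $P^{(i,j)}$ is the $M\times M$ permutation matrix for the shift $\bfx\mapsto\bfx+\bfv^{(j)}_i$, with $\bfv^{(j)}_i$ the $i$-th entry of $\bfS_j$, and $D^{(i,j)}=\diag(\alpha^{(i,j)}_\bfx)_{\bfx\in\FF_2^m}$. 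The MDS property therefore reduces to $\det A_{I,J}\ne 0$ for every such pair $(I,J)$; by duality between $s$ and $k-s$ erasures it suffices to restrict to $|I|=|J|=s$ with $s\le t=\lceil k/2\rceil$.

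The core technical step is to show that $\det A_{I,J}$ is not identically zero as a polynomial in the $\alpha$'s. Here the specific structure of the shifts $\bfS_j=(\vzero,\bfe_{j+1},\ldots,\bfe_m,\bfd_1,\ldots,\bfd_j)$ in Construction~A is essential: for each fixed $i\in I$, the family $\{\bfv^{(j)}_i\}_{j\in J}$ consists of pairwise distinct vectors, and the differences $\bfv^{(j)}_i-\bfv^{(j')}_i$ behave compatibly across the columns $i\in I$. I expect this step to be the main obstacle. My plan is to exhibit, for each fixed $(I,J)$, a specific diagonal assignment of the $\alpha$'s for which $A_{I,J}$ becomes block-triangular after a row-column permutation induced by the shifts, making it visibly non-singular; the Gray-code-like interleaving of $\bfe$'s and $\bfd$'s in $\bfS_j$ is exactly the feature that should drive such a triangularisation. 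If this direct approach fails for some $(I,J)$, the fallback is to extract a surviving monomial in the Leibniz expansion whose exponent vector is uniquely attained by a single permutation, which still guarantees a nonzero coefficient.

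To conclude, each $\det A_{I,J}$ is multilinear of bounded total degree in the $\alpha$'s, and the number of relevant pairs at level $s$ is $\binom{k}{s}^2$. A Schwartz--Zippel union bound over all pairs with $|I|=|J|=s\le t$, combined with the degree bound $sM\le 2^m\cdot t$ and standard estimates for $\binom{k}{s}$ in terms of $\binom{k-1}{t-1}$, yields the stated threshold $q>2^m\binom{k-1}{t-1}^2$ under which a coefficient assignment avoiding every zero set must exist in $\FF_q$, thereby establishing the MDS property.
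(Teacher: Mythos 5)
Your plan differs from the paper's proof in ways that matter for hitting the claimed threshold, and one step in your reduction is not correct. First, the duality you invoke does not hold: when $s$ information nodes are erased the recovery system involves an $s\times s$ block matrix $\bfD_{T^1,T^2}$, and there is no reduction of the case $s>\lceil k/2\rceil$ to $k-s<\lceil k/2\rceil$. The paper does not restrict $s\le t$; rather, it considers all block sizes $t\in[1,k]$, and $t=\lceil k/2\rceil$ enters only because it is the $t$ at which the per-variable degree bound is maximized. Second, and more fundamentally, the paper does \emph{not} use Schwartz--Zippel. It specializes the coefficients to be independent of $\bfx$, setting $\alpha^{(i,j)}_\bfx=x_{j,i}$, so that each block of $\bfD_{T^1,T^2}$ is a scalar times a permutation matrix; it then forms the product polynomial $Y=\prod_{T^1}\prod_{T^2}\det\bfD_{T^1,T^2}$ and applies the \emph{Combinatorial Nullstellensatz} of Alon, which requires $q>\max_{i,j}\deg_{x_{i,j}}Y$, a \emph{per-variable} degree bound. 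A careful count of how often each $x_{i,j}$ appears (via $Q_{\beta,\ell}=\binom{\beta}{\ell-1}\binom{k-\beta-1}{t-\ell}$ and Vandermonde's identity) yields exactly $\max\deg_{x_{i,j}}Y=2^m\binom{k-1}{t-1}^2$. By contrast, a Schwartz--Zippel union bound over the $\binom{k}{s}^2$ determinants of total degree $sM$ at each level would require $q$ to exceed roughly $\sum_s sM\binom{k}{s}^2$, which is larger than $2^m\binom{k-1}{t-1}^2$ by a factor on the order of $k$. So the route you propose would not establish the stated field-size bound.

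On the nonvanishing question you flag as the main obstacle: the paper does not triangularize. It uses the much simpler observation that with the scalar specialization the Leibniz/block expansion of $\det\bfD_{T^1,T^2}$ contains the term $\prod_{z}x_{i_z j_z}^{M}$ with coefficient $\det\bigl(\prod_z\bfP_{i_zj_z}\bigr)=\pm1$, since a product of permutation matrices is a permutation matrix. This term is the unique contributor to the extremal monomial of $Y$, so no cancellation can occur, and the Combinatorial Nullstellensatz applies immediately. Your fallback idea (a surviving monomial uniquely realized by one permutation) is the right instinct, but without the scalar specialization you have $2^m$ independent variables per $(i,j)$ pair and the extremal-monomial accounting does not produce the claimed exponent $2^m\binom{k-1}{t-1}^2$; you would be proving a different (and not obviously comparable) bound. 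To recover the proposition as stated, you should adopt the paper's scalar coefficients and the Nullstellensatz rather than Schwartz--Zippel.
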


\begin{remark}\label{rem:tamo-cost}
As mentioned earlier, the class of zigzag codes was proposed in~\cite{Tamo2013}.
Of significance, we compare Construction~A with Construction~1 in~\cite{Tamo2013}, which also have a rebuilding ratio $1/2$.
Similar to our construction, Construction~1 results in an array code with $2^m$ rows and $(m+1)$ information nodes.
However, Construction~1 uses only two parity check nodes $\bfp^{(0)}$ and $\bfp^{(1)}$ with corresponding $\bfS_0=(\vzero,\ldots,\vzero)$ and $\bfS_1=(\vzero,\bfe_1,\ldots,\bfe_m)$ (see Fig.~\ref{fig:zigzag}(a)). 
Notably, Construction~1 incurs a higher skip cost as illustrated in Fig.~\ref{fig:zigzag}(a).
Generally, when $m$ is even, the skip cost is 
$(m+2)\left(\sum^{m-1}_{i=0}2^{i}(2^{m-1-i}-1)+2^{m-1}\right)$;
while when $m$ is odd, the skip cost is
$(m+2)\left(\sum^{m-1}_{i=0}2^{i}(2^{m-1-i}-1)+2^{m-1}-1\right)$.
In both cases, the skip cost is at least $m^2 2^{m-1}$.

%Generally, for $m\ge 1$, when $2\mid m$, the skip cost is $(m+2)(2^{m-2}+2^{m-3}(2^2-1)+\ldots+(2^{m-1}-1))=(m+2)(\sum^{m-1}_{i=0}2^{i}(2^{m-1-i}-1)+2^{m-1})$; when $2\nmid m$, the skip cost is $(m+2)(2^{m-1}-1+2^{m-2}+2^{m-3}(2^2-1)+\ldots+(2^{m-1}-1))=(m+2)(\sum^{m-1}_{i=0}2^{i}(2^{m-1-i}-1)+2^{m-1}-1)$;  \hm{Wenqin, is this correct?}

In contrast, Construction~A has skip cost zero, but uses a significantly higher number of redundant nodes. 
It remains open whether we can achieve higher rates with zero skip cost and the same rebuilding ratio.
\end{remark}
%$$-----------$$

\vspace{1mm}

%{\color{red} Similar to Construction A, we introduce another class of MDS array codes featuring a repair scheme incurring zero skip cost. We will demonstrate the correctness of its repair scheme and establish that its code rate is approximately $2/3$, surpassing that of Construction A. Finally, we illustrate the construction's intuition through an example.}

Next, we present another class of MDS array codes with zero skip cost.
Furthermore, the code rate is approximately $2/3$ for large values of $k$.

\vspace{2mm}
\noindent{\bf Construction B.} 
Let $m\geq2$ and set $k=m+1$. 
We present an $(M\times N,k)$-MDS array code with $M=2^m$ packets and $N=k+ \left\lceil\frac{k}{2} \right\rceil+1=m+2+\left\lceil\frac{m+1}{2} \right\rceil$ nodes. Let $m'= \left\lceil\frac{m+1}{2} \right\rceil$. 
Of these $N$ nodes, $k=m+1$ of them are {\em systematic} nodes $\bfa^{(0)},\bfa^{(1)}\ldots, \bfa^{(m)}$, while the remaining $m'$ are parity nodes $\bfp^{(0)}, \bfp^{(1)},\ldots, \bfp^{(m')}$.
Now, instead of indexing the rows with $[M]$, we index them using the $M=2^m$ bitstrings in $\FF_2^m$. 
Here, the rows are arranged in lexicographic order.
Moreover, for convenience, we introduce the following notation for certain vectors in $\FF_2^m$:
\begin{itemize}
    % \item $\vzero$ denotes the zero vector;
    % \item For $i\in [m]$, we use $\bfe_{i}$ to denote the vector whose $i$-th entry is one and other entries are $0$.
    % \item For $i\in [m]$, we use $\bfd_{i}$ to denote the vector whose first $i$ entries are one and other entries are $0$. In other words, $\bfd_i=\bfe_1+\cdots+\bfe_i$.
    %\item {\color{red} Keep the above notation in Construction A. We use $\cU$ and $\cL$ denotes the set of bitstrings starting with zero and one respectively, $\cU'$ denotes the set of bitstrings starting with $``00"$  and $``11"$ and $\cL'=\FF_2^m  \setminus \cU'$. Given the lexicographic order, we observe that the first and second halves of the array are indexed by strings in $\cU$ and $\cL$, respectively. In addition, the first quarter and the last quarter of the array are indexed by strings in $\cU'$, the middle of the array is indexed by strings in $\cL'$.}
    \item As with Construction A, we use the notation $\cU$ and $\cL$ for the set of bitstrings starting with zero and one, respectively.
    Furthermore, we use $\cU'$ denotes the set of bitstrings starting with $``00"$  and $``11"$ and $\cL'=\FF_2^m  \setminus \cU'$. Given the lexicographic order, we observe that the first and second halves of the array are indexed by strings in $\cU$ and $\cL$, respectively. In addition, the first quarter and the last quarter of the array are indexed by strings in $\cU'$, the middle of the array is indexed by strings in $\cL'$.
\end{itemize}

\vspace{1mm}
\noindent\textit{Contents of systematic columns}: For $j\in[0,m]$, we simply set $\bfa^{(j)}=\left(a^{(j)}_{\bfx}\right)_{\bfx\in\FF_2^m}$.
\vspace{1mm}

\noindent\textit{Contents of parity columns}: For $j\in[0,m']$, we set $\bfp^{(j)}=\left(p^{(j)}_{\bfx}\right)_{\bfx\in\FF_2^m}$ and our task is to determine the parity sum $p^{(j)}_{\bfx}$ in terms of the $a^{(j)}_\bfx$'s. 
Now, in what follows, each parity sum in $\bfp^{(j)}$ is a linear combination $m+1$ information symbols and is determined by an $(m+1)$-tuple $\bfS_j$.  %Specifically, we do the following. do not consider m'=2

\begin{itemize}\itemindent=-10pt
\item Set $\bfS_0$ to be $(\vzero,\ldots, \vzero)$ and $S_1=\{\vzero,\bf \bfd_m,\bfe_2,\bfe_3,\ldots,\bfe_m \}$. 
\item For $j\in [2,m'-1]$, we set 
{\small
\begin{equation*}
    S_j=\{\vzero,
\bfe_{m-2j+3},\bfe_{m-2j+4},\ldots, \bfe_m,  \underbrace{\bfd_m}_{2j-th},\bfe_2,\ldots,\bfe_{m-2j+2} \},
\end{equation*}}
where $\bfd_m$ is the $2j$-th element of set $\bfS_j$. For example, if $j=2$, we have  $\bfS_2=\{\vzero, \bfe_{m-1},\bfe_{m},\bf d_m, \bfe_2,\ldots,e_{m-2}  \}$; if $j=3$, $\bfS_3=\{\vzero, \bfe_{m-3},\bfe_{m-2},\bfe_{m-1},\bfe_{m},\bf d_m, \bfe_2,\ldots,e_{m-4} \}$.

\item For the last parity columns,  $S_{m'}=\{\vzero,\bfd_2,\bfd_3,\ldots,\bfd_m,\bfd_1\}$.

In particular, if $m'=2$, we choose $S_1=\{\vzero,\bf \bfd_m,\bfe_2,\bfe_3,\ldots,\bfe_m \}$ and  $S_{2}=\{\vzero,\bfd_2,\bfd_3,\ldots,\bfd_m,\bfd_1\}$.

% For $j\in [2,m-2]$, we set 
% {\small
% \begin{equation*}
%     S_j=\{\vzero,
% \bfe_{m-2j+3},\bfe_{m-2j+4},\ldots, \bfe_m,  \underbrace{\bfd_m}_{(2j+1)-th},\bfe_2,\ldots,\bfe_{m-2j+2} \}.
% \end{equation*}}
% For the last two parity columns, if $2\mid m+1$,  $S_{m'-1}=\{\vzero,\ldots,\bfe_m,\bfd_m,\bfe_2,\bfe_3\}$. $S_{m'}=\{\vzero,\bfd_2,\bfd_3,\ldots,\bfd_m,\bfd_1\}$, if $2\nmid m+1$, $S_{m'-1}=\{\vzero,\bfe_3,\ldots,\bfe_m,\bfd_m,\bfe_2\}$, $S_{m'}=\{\vzero,\bfd_2,\bfd_3,\ldots,\bfd_m,\bfd_1\}$
% $S_1=\{\vzero,\bf \bfd_m,\bfe_2,\bfe_3,\ldots,\bfe_m \}$, $S_2=\{\vzero,\bfe_{m-1},\bfe_m,\bfd_m,\bfe_2,\ldots,\bfe_{m-2}\}$, $S_3=\{\vzero,\bfe_{m-3},\bfe_{m-2},\bfe_{m-1},\bfe_m,\bfd_m,\bfe_2,\ldots,\bfe_{m-4}\}$, $\ldots$.
% Here, if $2\mid m+1$,  $S_{m'-1}=\{\vzero,\ldots,\bfe_m,\bfd_m,\bfe_2,\bfe_3\}$. $S_m'=\{\vzero,\bfd_2,\bfd_3,\ldots,\bfd_m,\bfd_1\}$, if $2\nmid m+1$, $S_{m'-1}=\{\vzero,\bfe_3,\ldots,\bfe_m,\bfd_m,\bfe_2\}$, $S_m'=\{\vzero,\bfd_2,\bfd_3,\ldots,\bfd_m,\bfd_1\}$
\item Similar to the Construction A, we use the tuple $\bfS_j$ to generate a parity column $\bfp^{(j)}$. 
If $\bfS_j=(\bfv_0,\ldots, \bfv_m)$ and we consider row $\bfx$, we set $\bfu_i=\bfx+\bfv_i$ and define $p^{(j)}_{\bfx} = \sum^{m}_{i=0}\alpha^{(i,j)}_{\bfx}a^{(i)}_{\bfu_i}$ for some choice of coefficients $\alpha^{(i,j)}_{\bfx}\in\FF_q$ 
% \item For example, for the first parity column $\bfp^{(0)}$, we have that $\bfS_0=(\vzero,\ldots, \vzero)$.
% Then for row $\bfx$, the parity sum is $p^{(0)}_{\bfx}=\alpha^{(0,0)}_{\bfx}a^{(0)}_{\bfx}+\cdots+\alpha^{(m,0)}_{\bfx}a^{(m)}_{\bfx}$.
% In other words, $p^{(0)}_{\bfx}$ is a linear combination of the $(m+1)$ information symbols in the same row.
\end{itemize}

\vspace{1mm}
\noindent\textit{Repair of Information Node $s$ for $s\in [0,m]$}:

\begin{itemize}\itemindent=-10pt
\item If $s\in [m-1]$ satisfies  $2\nmid s$, set helper nodes to be $\cH = \{\bfa^{(i)}: i\in [0,m]\setminus\{s\}\}\cup\{\bfp^{(0)},\bfp^{(\frac{s+1}{2}  )}\}$. 
\begin{itemize}\itemindent=-20pt
    \item From $\bfa^{(i)}$, we read $\Read_{\bfa^{(i)}}=\{a^{(i)}_\bfx:\bfx\in\cU\}$.
    \item From $\bfp^{(i)}$ with $i\in\{0,\frac{s+1}{2}\}$, we read $\Read_{\bfp^{(i)}}=\{p^{(i)}_\bfx:\bfx\in\cU\}$.
\end{itemize}
\item If $s\in [m-1]$ satisfies  $2\mid s$, set helper nodes to be $\cH = \{\bfa^{(i)}: i\in [0,m]\setminus\{s\}\}\cup\{\bfp^{(0)},\bfp^{(\frac{s}{2}  )}\}$. 
\begin{itemize}\itemindent=-20pt
    \item From $\bfa^{(i)}$, we read $\Read_{\bfa^{(i)}}=\{a^{(i)}_\bfx:\bfx\in\cL'\}$.
    \item From $\bfp^{(i)}$ with $i\in\{0,\frac{s}{2}\}$, we read $\Read_{\bfp^{(i)}}=\{p^{(i)}_\bfx:\bfx\in\cL'\}$.
\end{itemize}
\item If $s=0$, set the helper nodes to be $\cH = \{\bfa^{(i)}: i\in [m]\}\cup\{\bfp^{(0)},\bfp^{(m')}\}$. 
\begin{itemize}\itemindent=-20pt
    \item From $\bfa^{(i)}$ with $i\in [m]$, we read $\Read_{\bfa^{(i)}}=\{a^{(i)}_\bfx:\bfx\in\cU\}$.
    \item From $\bfp^{(0)}$, we read $\Read_{\bfp^{(0)}}=\{p^{(0)}_\bfx:\bfx\in\cU\}$.
    \item From $\bfp^{(m')}$, we read $\Read_{\bfp^{(m)}}=\{p^{(m')}_\bfx:\bfx\in\cL\}$.
\end{itemize}

\item If $s=m$, set the helper nodes to be $\cH = \{\bfa^{(i)}: i\in [m]\}\cup\{\bfp^{(0)},\bfp^{(m')}\}$. 
\begin{itemize}\itemindent=-20pt
    \item From $\bfa^{(i)}$ with $i\in [m]$, we read $\Read_{\bfa^{(i)}}=\{a^{(i)}_\bfx:\bfx\in\cL'\}$.
    \item From $\bfp^{(i)}$ with $i\{0,m'\}$, we read $\Read_{\bfp^{(m')}}=\{p^{(0)}_\bfx:\bfx\in\cL'\}$.
\end{itemize}
\end{itemize}

The construction of 
systematic columns is similar to the Construction A. The difference between Construction A and Construction B is the construction of the parity columns.

\begin{theorem}\label{thm:zigzag2}
The repair scheme for Construction~B is correct. 
That is, for $s\in [0,m]$, we are able to determine $\bfa^{(s)}$ from all reads.
Moreover, the repair scheme has zero skip cost and optimal rebuilding ratio $1/2$. Additionally, the code rate of MDS array codes from Construction~B is $\frac{m+1}{m+2+\lceil \frac{m+1}{2}\rceil}$, which is approximately $2/3$.
\end{theorem}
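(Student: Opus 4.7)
The plan is to follow the template of the proof of Theorem~\ref{thm:zigzag1}: for each failed systematic node $\bfa^{(s)}$ I identify the information symbols $a^{(i)}_{\bfx+\bfv_i^{(j)}}$ touched by each read parity sum $p^{(j)}_\bfx$, cancel out every $i\neq s$ contribution using the reads from $\bfa^{(i)}$, and check that the residuals jointly determine $a^{(s)}_y$ for every $y\in\FF_2^m$. Everything reduces to a coordinate-wise principle about how the various shifts act on the four row classes: $\cU$ and $\cL$ are swapped by any shift whose first bit is $1$ and preserved by any shift whose first bit is $0$, while $\cL'$ is preserved by $\vzero$, by $\bfe_\ell$ with $\ell\ge 3$, and by $\bfd_j$ with $j\ge 2$ (since both the first and second bits flip), but is swapped with $\cU'$ by $\bfe_2$ and by $\bfd_1$.

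The cases $s=0$ and $s=m$ reduce to the Construction~A argument specialized to the new $\bfS_{m'}=(\vzero,\bfd_2,\ldots,\bfd_m,\bfd_1)$. For $s=0$, every nonzero entry of $\bfS_{m'}$ begins with a $1$, so reading $\bfp^{(0)}$ over $\cU$ yields $a^{(0)}_y$ for $y\in\cU$ after cancellation and reading $\bfp^{(m')}$ over $\cL$ yields $a^{(0)}_y$ for $y\in\cL$. For $s=m$ I instead read over $\cL'$ and rely on the fact that $\bfd_1$ sits at the last coordinate of $\bfS_{m'}$: it is the unique entry that swaps $\cL'$ with $\cU'$, while every other nonzero entry $\bfd_j$ with $j\ge 2$ preserves $\cL'$, so cancellation at all other coordinates proceeds smoothly.

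For $s\in[m-1]$ the placement of entries in $\bfS_j$ is tailored so that the single defect sits exactly at coordinate $s$ of the chosen parity node. If $s$ is odd, $j=(s+1)/2$ and $\bfd_m$ occupies the $(2j)$-th one-indexed slot, i.e., the $s$-th zero-indexed coordinate of $\bfS_j$; reading over $\cU$ then delivers $a^{(s)}_y$ for $y\in\cL$ at that coordinate, while every other entry is either $\vzero$ or an $\bfe_\ell$ with $\ell\ge 2$ whose first bit is $0$, keeping the other contributions inside $\cU$ where they cancel against the reads from $\bfa^{(i)}$. If $s=2j$ is even, $\bfS_{s/2}$ instead places $\bfe_2$ at coordinate $s$ (swapping $\cL'$ with $\cU'$ and so producing $a^{(s)}_y$ for $y\in\cU'$), $\bfd_m$ at coordinate $s-1$ (which preserves $\cL'$), and fills the other coordinates with $\vzero$ or $\bfe_\ell$ for $\ell\in\{3,\ldots,m\}$. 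The main obstacle is verifying that these secondary $\bfe$'s truly have index at least $3$: in $\bfS_j$ the two flanking blocks are $\bfe_{m-2j+3},\ldots,\bfe_m$ and $\bfe_3,\ldots,\bfe_{m-2j+2}$, so the constraint is $m-2j+3\ge 3$, i.e., $2j\le m$, which holds because $s=2j\le m-1$.

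Once repair correctness is settled, the remaining claims are short. Skip cost is zero because $\cU$, $\cL$, and $\cL'$ each consist of a single block of consecutive rows in the lexicographic order on $\FF_2^m$ (for $\cL'$ the last ``$01\cdots 1$'' row abuts the first ``$10\cdots 0$'' row). Each helper contributes $2^{m-1}$ of its $2^m$ packets, so the rebuilding ratio is $1/2$, meeting the cutset bound $1/(d-k+1)$ with $d=k+1$ helpers. The rate equals $(m+1)/(m+2+\lceil(m+1)/2\rceil)$, which tends to $2/3$ as $m\to\infty$. Finally, the MDS property is established by an analog of Proposition~\ref{prop:mds}: one forms the product over all relevant column subsets of the determinants of the $t\times t$ permutation-and-coefficient blocks, extracts a leading monomial whose coefficient is $\pm 1$, and invokes the Combinatorial Nullstellensatz to produce coefficients in $\FF_q$ making every punctured $k$-subset of columns invertible whenever $q$ is sufficiently large; the combinatorial content of this argument is essentially insensitive to the reshuffled position of $\bfd_m$ within $\bfS_j$, so the derivation adapts from that of Construction~A.
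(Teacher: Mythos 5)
Your proposal is correct and follows essentially the same route as the paper: decompose by the parity of $s$ (plus the boundary cases $s=0,m$), track how each shift $\vzero$, $\bfe_\ell$, $\bfd_j$ acts on the row classes $\cU$, $\cL$, $\cU'$, $\cL'$, and check that the defect in $\bfS_j$ ($\bfd_m$ for odd $s$, $\bfe_2$ for even $s$, $\bfd_1$ for $s=m$) sits exactly at coordinate $s$ while every other coordinate preserves the read class. Your explicit check that the flanking $\bfe_\ell$'s have $\ell\ge 3$ (via $2j\le m$) and your clean statement of which shifts preserve versus swap $\cL'$ are actually a bit tighter than the paper's sketch, but the argument is the same; the closing paragraph on MDS is extra, since the theorem itself only asserts repair correctness, skip cost, rebuilding ratio, and rate.
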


\begin{proof}
% The proof for $s\in \{0,1,\ldots,m \}$ is similar to the proof of Theorem~\ref{thm:zigzag1}.
We will give the proof of the case $s\in[2,m-1]$, the other cases of $s\in\{0,m\} $can be proved similarly. The proof of case $s\in[1,m-1]$ will be divided into two subcases.

Case(1):
For the case $2\nmid s$, we read the symbols from  $\bfp^{(0)}$ and $\bfp^{\frac{(s+1)}{2})}$ are
\begin{align*}
\Read_{\bfp^{(0)}}   &= \left\{ \sum_{i=0}^m \alpha_\bfx^{(i,j)} a^{(i)}_{\bfu_i}: \bfx\in \cU\right\},\\
\Read_{\bfp^{(s+1)/2}} &= \left\{ \sum_{i=0}^m \alpha_\bfx^{(i,j)} a^{(i)}_{\bfu_i}: \bfx\in \cU\right\}.    
\end{align*}
Here, let $\cI_1$ and $\cI_2$ be the information symbols involved in the reads $\Read_{\bfp^{(0)}}$ and $\Read_{\bfp^{(s+1)/2}}$, respectively.

We observe that 
\begin{align}\label{eq:1}
\bfS_{0} &= (\vzero,\vzero,\ldots,{\color{blue}{\vzero}},{\color{red}{\vzero}},\ldots, \vzero),\\ \label{eq:2}
\bfS_{\frac{s+1}{2}} & =\{\vzero,
\bfe_{m-s+2},\bfe_{m-s+3},\ldots, \bfe_m, {\color{blue}\bfd_m},{\color{red}\bfe_2},\ldots,\bfe_{m-s+1} \}. 
\end{align}
when $s=1$, 
 $\bfS_0=(\vzero, {\color{blue}\vzero},{\color{red}\vzero}, \ldots, \vzero)$ and $S_1=\{\vzero,\bf {\color{blue}\bfd_m},{\color{red}\bfe_2},\bfe_3,\ldots,\bfe_m \}$, where the highlighted positions in blue are the $s$-th entries. Hence, with a similar discussion as Theorem~\ref{thm:zigzag1}, we have the following.
\begin{itemize} 
\setlength{\leftmargin}{0pt}
\item {\em $\cI_1\cup \cI_2$ contains all information symbols $a^{(s)}_\bfx$ with $\bfx\in\FF_2^m$}. 

\item {\em If $i\in\{0,\ldots,k\} 
\setminus s$, then $\cI_1\cup \cI_2$ contains exactly all information symbols $a^{(i)}_\bfx$ with $\bfx\in\cU$}. Indeed, in both $\bfS_{0}$ and $\bfS_{(s+1)/2}$, the $i$-th entries belong to $\cU$. Hence, we have that $a^{(i)}_{\bfu_i}$ always belong to $\cU$.
\end{itemize}

Case(2): For the case $2\nmid s$, we read the symbols from  $\bfp^{(0)}$ and $\bfp^{\frac{(s+1)}{2})}$ are
\begin{align*}
\Read_{\bfp^{(0)}}   &= \left\{ \sum_{i=0}^m \alpha_\bfx^{(i,j)} a^{(i)}_{\bfu_i}: \bfx\in \cL'\right\},\\
\Read_{\bfp^{(s+1)/2}} &= \left\{ \sum_{i=0}^m \alpha_\bfx^{(i,j)} a^{(i)}_{\bfu_i}: \bfx\in \cL'\right\}.    
\end{align*}

From equations~\eqref{eq:1} and~\eqref{eq:2}, the
highlighted positions in red are the s-th entries.
Hence, with a similar discussion as Theorem 4, we have the following.

\begin{itemize} 
\setlength{\leftmargin}{0pt}
\item  $\cI_1\cup \cI_2$ contains all information symbols $a^{(s)}_\bfx$ with $\bfx\in\FF_2^m$. Here, for a $\bfx\in \cU'$, if $\bfy\in \cU'$, we have $\bfx+\bfy\in \cU'$; if  $\bfy\in \cL'$, we have $\bfx+\bfy\in \cL'$. Similarly, for a $\bfx\in \cL'$, if $\bfy\in \cL'$, we have $\bfx+\bfy\in \cU'$. Obviously, $\{\bfe_{m-s+2},\ldots,\bfe_{m},\bfd_m,\bfe_3,\ldots,\bfe_{m-s+1}\} \subseteq \cU'$, and $\bfe_2\in\cL'$.
\item  If $i\in\{0,\ldots,k\} 
\setminus s$, then $\cI_1\cup \cI_2$ contains exactly all information symbols $a^{(i)}_\bfx$ with $\bfx\in\cL'$. Indeed, in both $\bfS_{0}$ and $\bfS_{(s+1)/2}$, the $i+1$-th entries belong to $\cL'$. Hence, we have that $a^{(i)}_{\bfu_i}$ always belong to $\cL'$.
\end{itemize}

The cases $s\in\{0,m\}$ can be proved in a similar way. Therefore, using the reads $\Read_{\bfa^{(i)}}~(i\ne s)$, we eliminate all information symbols $a^{(i)}_\bfx$ with $i\ne s$ from the corresponding check sums and recover $\bfa^{(s)}$.

It is clear that the skip cost of this repair scheme is zero, as we consistently download symbols from all consecutive rows in either $\cU$ or $\cL'$. Consequently, the rebuilding ratio stands at $\frac{m+1}{m+2+\lceil \frac{m+1}{2}\rceil}$. 
As $m$ increases, the rebuilding ratio approaches $\frac{2}{3}$. Furthermore, 
this is optimal as the number of helper nodes is  $m+2= k+1$.

\end{proof}

% $$-----------$$
% \begin{remark}

% \end{remark}

% $$-----------$$
\begin{figure*}[!t]
\centering
   % \vspace{2mm}
    
%\footnotesize
 %\noindent(a) $(8\times 7,4)$-array code code with skip cost zero.
%\vspace{1mm}
{\centering \scriptsize
\setlength\tabcolsep{1pt} 
    \begin{tabular}{|c||c|c|c|c|c|c|c|}
    \hline
    & $\bfa^{(0)}$ & $\bfa^{(1)}$ & $\bfa^{(2)}$ & $\bfa^{(3)}$ &
      $\bfp^{(0)}$ & $\bfp^{(1)}$ & $\bfp^{(2)}$ \\
    & $\spadesuit$  & $\heartsuit$ &  $\clubsuit$ & $\diamondsuit$   & 
    $\bfS_0=(\vzero,\vzero,\vzero,\vzero)$ & $\bfS_1=(\vzero,\bfd_3,\bfe_2,\bfe_3)$ & $\bfS_2=(\vzero,\bfd_2,\bfd_3,\bfd_1)$ \\ \hline 
    $000$ & $000_\spadesuit$ & $000_\heartsuit$  & {\color{red}$000_\clubsuit$} & $000_\diamondsuit$ & 
$000_{ \spadesuit }\boxplus 000_{\heartsuit }\boxplus 000_{\clubsuit} \boxplus 000_{\diamondsuit }$  & 
   $000_{ \spadesuit }\boxplus 111_{\heartsuit }\boxplus 010_{\clubsuit}\boxplus 001_{\diamondsuit} $ &
    $000_{ \spadesuit }\boxplus 110_{\heartsuit }\boxplus 111_{\clubsuit}\boxplus 100_{\diamondsuit} $  \\ 
    
    $001$ & $001_\spadesuit$ & $001_\heartsuit$  & {\color{red}$001_\clubsuit$ }&
    $001_\diamondsuit$ & 
$001_{ \spadesuit }\boxplus 001_{\heartsuit}\boxplus 001_{\clubsuit} \boxplus 001_{\diamondsuit}  $ &
$001_{ \spadesuit }\boxplus 110_{\heartsuit}\boxplus 011_{\clubsuit}\boxplus 000_{\diamondsuit}  $ &
    $001_{ \spadesuit }\boxplus 111_{\heartsuit }\boxplus 110_{\clubsuit}\boxplus 101_{\diamondsuit} $  \\ 
    
    $010$ & {\color{blue}$010_\spadesuit$} & {\color{blue}$010_\heartsuit$}  & {\color{red}$010_\clubsuit$} &
    {\color{blue}$010_\diamondsuit$} & 
    {\color{blue} $010_{\spadesuit }\boxplus 010_{\heartsuit}\boxplus 010_{\clubsuit}\boxplus 010_{\diamondsuit}  $} &
   {\color{blue} $010_{\spadesuit }\boxplus 101_{\heartsuit }\boxplus 000_{\clubsuit}\boxplus 011_{\diamondsuit}$} &
    $010_{\spadesuit}\boxplus 100_{\heartsuit}\boxplus 101_{\clubsuit}\boxplus 110_{\diamondsuit}$  \\

    $011$ & {\color{blue}$011_\spadesuit$} & {\color{blue}$011_\heartsuit$ } & {\color{red} $011_\clubsuit$} & {\color{blue}$011_\diamondsuit$} & 
    {\color{blue}$011_{\spadesuit }\boxplus 011_{\heartsuit }\boxplus 011_{\clubsuit}\boxplus 011_{\diamondsuit} $ }& 
   {\color{blue}$011_{\spadesuit }\boxplus 100_{\heartsuit }\boxplus 001_{\clubsuit}\boxplus 010_{\diamondsuit} $ }& 
    $011_{\spadesuit }\boxplus 101_{\heartsuit }\boxplus 100_{\clubsuit}\boxplus 111_{\diamondsuit} $  \\ 

    $100$ &  {\color{blue}$100_\spadesuit$ }&  {\color{blue}$100_\heartsuit$  }&  {\color{red} $100_\clubsuit$} & {\color{blue}$100_\diamondsuit$} & 
    {\color{blue} $100_{\spadesuit }\boxplus 100_{\heartsuit}\boxplus 100_{\clubsuit}\boxplus 100_{\diamondsuit}$} & 
 {\color{blue}  $100_{\spadesuit }\boxplus 011_{\heartsuit }\boxplus 110_{\clubsuit}\boxplus 101_{\diamondsuit}$ }& 
    $100_{\spadesuit}\boxplus 010_{\heartsuit}\boxplus 011_{\clubsuit}\boxplus 000_{\diamondsuit}$  \\ 
    
      $101$ & {\color{blue}$001_\spadesuit$} &{\color{blue} $101_\heartsuit$}  & {\color{red}$101_\clubsuit$} &
    {\color{blue}$101_\diamondsuit$} & 
    {\color{blue}$101_{\spadesuit }\boxplus 101_{\heartsuit}\boxplus 101_{\clubsuit} \boxplus 101_{\diamondsuit} $} &
    {\color{blue}$101_{\spadesuit }\boxplus 010_{\heartsuit}\boxplus 111_{\clubsuit}\boxplus 100_{\diamondsuit}$} &
    $101_{\spadesuit}\boxplus 011_{\heartsuit}\boxplus 010_{\clubsuit}\boxplus 001_{\diamondsuit} $  \\ 
    
     $110$ & $110_\spadesuit$ & $110_\heartsuit$ & {\color{red}$110_\clubsuit$} &
    $110_\diamondsuit$ & 
    $110_{\spadesuit}\boxplus 110_{\heartsuit}\boxplus 110_{\clubsuit}\boxplus 110_{\diamondsuit}  $ &
    $110_{\spadesuit}\boxplus 001_{\heartsuit}\boxplus 100_{\clubsuit}\boxplus 111_{\diamondsuit} $ &
    $110_{\spadesuit}\boxplus 000_{\heartsuit}\boxplus 001_{\clubsuit}\boxplus 010_{\diamondsuit} $  \\

    $111$ & $111_\spadesuit$ & $111_\heartsuit$  & {\color{red}$111_\clubsuit$} & $111_\diamondsuit$ & 
    $111_{\spadesuit}\boxplus 111_{\heartsuit}\boxplus 111_{\clubsuit}\boxplus 111_{\diamondsuit} $ & 
  $111_{\spadesuit}\boxplus 000_{\heartsuit}\boxplus 101_{\clubsuit}\boxplus 110_{\diamondsuit} $ & 
    $111_{\spadesuit}\boxplus 001_{\heartsuit}\boxplus 000_{\clubsuit}\boxplus 011_{\diamondsuit} $  \\ \hline
    \end{tabular}}
       \caption{
    Example of a {$(8\times 7,4)$}-MDS array code described by Construction~B with $m=3$ (see Section~\ref{sec:zigzag}). Suppose information node $\bfa^{(2)}$ (highlighted in {\color{red}red}) fails. We contact nodes $\bfa^{(0)}$, $\bfa^{(1)}$, $\bfa^{(3)}$, $\bfp^{(0)}$ and $\bfp^{(1)}$ and read the contents in {\color{blue}blue}. Here, the skip cost is zero.
    % Note that we use $\bfx_{i}$ to represent the information symbol $a_\bfx^{(i)}$, while the `sum' $\bf x_i\boxplus \bfy_j\boxplus \bfz_k$ indicates that the corresponding codesymbol is a linear combination of $a_\bfx^{(i)}$, $a_\bfy^{(j)}$, and $a_\bfz^{(k)}$. %\hm{Use plus.}
    }\label{fig:zigzag2}

\end{figure*}

\begin{example}
    Let $m=3$. Hence, $M=8$ and $N=7$. We consider MDS array code with four systematic information columns three parity columns is as follows. The placement of codesymbols is presented as Fig {\color{red}\ref{fig:zigzag2}}. Similar to the Example 1, we index the information nodes with use the symbols  $\{\spadesuit,\heartsuit,\clubsuit,\diamondsuit\}$ in place of $\{0,1,2,3\}$.  Additionally, instead of writing $a^{(i)}_{\boldsymbol x}$, we write it as $\boldsymbol x_{i}$. In other words, $000_{\spadesuit}$ represents the information symbol $a^{(\spadesuit)}_{000}$. For the parity symbols, we use ${\boldsymbol x}_i \boxplus {\boldsymbol y}_j \boxplus {\boldsymbol z}_k \boxplus {\boldsymbol w}_t$  to represent a linear combination of $a_{\boldsymbol x^{(i)}}$, $a_{\boldsymbol y^{(j)}}$, $a_{\boldsymbol z^{(k)}}$, and $a_{\boldsymbol w^{(t)}}$ .
     
Assume that the information node corresponding to $\clubsuit$ (or $s=1$)   in Fig.~\ref{fig:zigzag2}(a) has failed. From our scheme, the helper information nodes are $\spadesuit$, $\heartsuit$, and $\diamondsuit$, while the helper parity nodes are $\bfp^{(0)}$ and $\bfp^{(1)}$. Specifically, the reads from the parity nodes contain the following information symbols.

    \begin{itemize}
    \item From $\boldsymbol{p}^{(0)}$, we obtain
\begin{align} \nonumber
\mathcal{I}_1& \triangleq &
    \{010_{ \spadesuit },010_{\heartsuit }, {\color{red}010_{\clubsuit}}, 010_{\diamondsuit}\}   \cup \{011_\spadesuit,011_\heartsuit,{\color{red}011_\clubsuit},011_\diamondsuit 
    \} 
    \cup \{100_\spadesuit, 100_\heartsuit,{\color{red}100_\clubsuit},100_\diamondsuit  \}  \cup \{ 101_\spadesuit,101_\heartsuit,{\color{red}101_\clubsuit},101_\diamondsuit\}
        \end{align}

\item From $\boldsymbol p^{(1)}$, we obtain
 \begin{align} \nonumber
 \mathcal {I}_2 &\triangleq& 
   \{010_{\spadesuit }, 101_{\heartsuit},{\color{red}000_\clubsuit},011_\diamondsuit \}\cup
    \{011_{\spadesuit }, 100_{\heartsuit},{\color{red}001_\clubsuit}, 010_{\diamondsuit}\}\
     \cup  \{ 100_{\spadesuit},011_{\heartsuit},{\color{red}110_{\clubsuit}},101_{\diamondsuit}\}  \cup \{101_{\spadesuit},010_{\heartsuit},{\color{red}111_{\clubsuit}},100_{\diamondsuit} \}
        \end{align} 
\end{itemize}
    Indeed, we check that 
\begin{equation*}
\mathcal {I}_1\cup\mathcal {I}_2 = \{\boldsymbol{x}_{\spadesuit}: \boldsymbol{x}\in \mathcal L'\}\cup
\{\boldsymbol{x}_{\heartsuit}: \boldsymbol{x} \in \mathcal L'\}
\cup
\{\boldsymbol{x}_{\clubsuit}: \boldsymbol{x}\in\mathbb{F}_2^3\}\cup
\{\boldsymbol{x}_{\diamondsuit}:x\in \mathcal{L'} \},
\end{equation*}
    which corroborates with the claims in the proof of Theorem~\ref{thm:zigzag2}.
    Since we download the information symbols $\{\boldsymbol{x}_{\spadesuit}: \boldsymbol{x}\in\mathcal{L'}\}$ , $\{\boldsymbol{x}_{\heartsuit} : \boldsymbol{x}\in\mathcal{L'}\}$ 
 and $\{\boldsymbol{x}_{\diamondsuit} : \boldsymbol{x}\in\mathcal{L'}\}$ 
 from $\boldsymbol{a}^{(0)}$, $\boldsymbol{a}^{(1)}$, and $\boldsymbol{a}^{(3)}$ respectively, we recover all values in $\{\boldsymbol{x}_{\clubsuit}: \boldsymbol{x}\in\mathbb{F}_2^3\}$, i.e., $\boldsymbol{a}^{(2)}$.
    Since we download contents from rows $\mathcal{L'}$, the skip cost is zero.

 Similarly, if information node  $\heartsuit$ ($\bfa^{(1)}$) is erased. From Fig.~\ref{fig:zigzag2}, the helper information nodes are $\spadesuit$, $\clubsuit$, and $\diamondsuit$, while the helper parity nodes also are $\bfp^{(0)}$ and $\bfp^{(1)}$. The reads from the parity nodes contain the following information symbols.
   \begin{itemize}
    \item From $\boldsymbol{p}^{(0)}$, we obtain
 {\small\begin{align} \nonumber
\mathcal{I}_1& \triangleq &
    \{000_{ \spadesuit },{\color{red}000_{\heartsuit }}, 000_{\clubsuit}, 000_{\diamondsuit}\}  \cup \{001_\spadesuit,{\color{red}001_\heartsuit},001_\clubsuit,001_\diamondsuit 
    \} 
    \cup \{010_\spadesuit, {\color{red}010_\heartsuit},010_\clubsuit,010_\diamondsuit  \}  \cup \{ 011_\spadesuit,{\color{red}011_\heartsuit},011_\clubsuit,011_\diamondsuit\}
        \end{align}}

\item From $\boldsymbol p^{(1)}$, we obtain
 {\small\begin{align} \nonumber
 \mathcal {I}_2 &\triangleq& 
   \{000_{\spadesuit }, {\color{red}111_{\heartsuit}},010_\clubsuit,001_\diamondsuit \}\cup
    \{001_{\spadesuit }, {\color{red}110_{\heartsuit}},011_\clubsuit, 000_{\diamondsuit}\}\ 
     \cup  \{ 010_{\spadesuit}, {\color{red}101_{\heartsuit}},000_{\clubsuit},011_{\diamondsuit}\}  \cup \{011_{\spadesuit},{\color{red}100_{\heartsuit}},001_{\clubsuit},010_{\diamondsuit} \}
        \end{align}}
\end{itemize}
    Indeed, we check that 
{\small
\begin{equation*}
\mathcal {I}_1\cup\mathcal {I}_2 = \{\boldsymbol{x}_{\spadesuit}: \boldsymbol{x}\in \mathcal U\}\cup
\{\boldsymbol{x}_{\heartsuit}: \boldsymbol{x} \in \mathbb{F}_2^3\}
\cup
\{\boldsymbol{x}_{\clubsuit}: \boldsymbol{x}\in\mathcal{U}\}\cup
\{\boldsymbol{x}_{\diamondsuit}:x\in \mathcal{U} \}.
\end{equation*}}
\noindent Combining with nodes $\boldsymbol{p}^{(0)}$ and $\boldsymbol{p}^{(1)}$, it is clear that we can recover all values in $\{\boldsymbol{x}_{\heartsuit}: \boldsymbol{x}\in\mathbb{F}_2^3\}$ by downloading the information symbols $\{\boldsymbol{x}_{\spadesuit}: \boldsymbol{x}\in\mathcal{L'}\}$, $\{\boldsymbol{x}_{\clubsuit} : \boldsymbol{x}\in\mathcal{L'}\}$, and $\{\boldsymbol{x}_{\diamondsuit} : \boldsymbol{x}\in\mathcal{L'}\}$ from $\boldsymbol{a}^{(0)}$, $\boldsymbol{a}^{(2)}$, and $\boldsymbol{a}^{(3)}$.

Additionally, if node $\boldsymbol{a}^{(0)}$ is failed,  we repair it by reading $ \{p_\bfx: \bfx\in \cU \}$ and $ \{p_\bfx: x\in \cL \}$  from parity nodes $\boldsymbol{p}^{(0)}$, $\boldsymbol{p}^{(2)}$ and $\{\boldsymbol{x}_{\heartsuit}: \boldsymbol{x}\in\mathcal{U}\}$, $\{\boldsymbol{x}_{\clubsuit} : \boldsymbol{x}\in\mathcal{U}\}$, and $\{\boldsymbol{x}_{\diamondsuit} : \boldsymbol{x}\in\mathcal{U}\}$ from $\boldsymbol{a}^{(0)}$, $\boldsymbol{a}^{(2)}$, and $\boldsymbol{a}^{(3)}$. If node $\boldsymbol{a}^{(3)}$ is failed,  we repair it by reading $ \{p_\bfx: \bfx\in \cL' \}$ and $ \{p_\bfx: x\in \cL' \}$  from parity nodes $\boldsymbol{p}^{(0)}$, $\boldsymbol{p}^{(2)}$ and $\{\boldsymbol{x}_{\heartsuit}: \boldsymbol{x}\in\mathcal{L'}\}$, $\{\boldsymbol{x}_{\clubsuit} : \boldsymbol{x}\in\mathcal{L'}\}$, and $\{\boldsymbol{x}_{\diamondsuit} : \boldsymbol{x}\in\mathcal{L'}\}$ from $\boldsymbol{a}^{(0)}$, $\boldsymbol{a}^{(2)}$, and $\boldsymbol{a}^{(3)}$. Obviously, our repair scheme for each information node have zero skip cost. \qedsymbol
\end{example}

% $$---------$$

% {\color{red}It remains to choose the coefficients $\alpha^{(i,j)}_\bfx$ so as to achieve the MDS property.
% As the proof is technical and similar to that in~\cite{Tamo2013}, we defer the proof of the following proposition to Appendix~\ref{app:mds}. }%\hm{Wenqin, can you write the proof?}
%\hm{To write example. To write remarks about differences with Tamo-Wang-Bruck. To write remarks about optimal repair. To prove MDS!!}
% $$----------------$$

%$$--------------$$

%Note that Lower sub-packetization levels offer the advantage of facilitating system implementation while also potentially dispersing the burden of supplying repair information for a failed node across numerous nodes. Consequently, large-scale distributed storage systems necessitate MDS array codes with small sub-packetization levels.
%Hence, we consider the MDS array code with any information columns $k$ for the fixed $M$. In other words, we can give the  construction of MDS array codes with small sub-packetization and zero skip cost while ensuring optimal updates.

Now, lower sub-packetization levels simplify implementation while  distributing the responsibility of providing repair information for a failed node across multiple nodes. Consequently, large-scale distributed storage systems require MDS array codes with small sub-packetization levels.
Therefore, in the next construction, we focus on MDS array codes with any information columns $k$ for a fixed subpacketization level $M$. 
In other words, we construct MDS array codes with small sub-packetization and zero skip cost.

 We retain the previous notation $\mathcal{U}$, $\mathcal{U}'$, $\mathcal{L}$, and $\mathcal{L}'$ in Construction B for the sake of simplicity. Next, we present a generalization of Construction B, wherein the number of information nodes $k$ does not depend on the sub-packetization $M$.

\vspace{1mm}
\noindent{\bf Construction C.} 
Let $m\geq2$ and $k\geq2$ be positive integers. We present an $(M\times N,k)$-MDS array code with $M=2^m$ packets and $N=k+ \left\lceil\frac{k}{2} \right\rceil+1$ nodes. Let $m'= \left\lceil\frac{k}{2} \right\rceil$. 
Of these $N$ nodes, $k$ of them are {\em systematic} nodes $\bfa^{(0)},\bfa^{(1)}\ldots, \bfa^{(k-1)}$, while the remaining $m'$ are parity nodes $\bfp^{(0)}, \bfp^{(1)},\ldots, \bfp^{(m')}$.

\vspace{1mm}
\noindent\textit{Contents of systematic columns}: For $j\in[0,k-1]$, we simply set $\bfa^{(j)}=\left(a^{(j)}_{\bfx}\right)_{\bfx\in\FF_2^m}$.
\vspace{1mm}

\noindent\textit{Contents of parity columns}: For $j\in[0,m']$, we set $\bfp^{(j)}=\left(p^{(j)}_{\bfx}\right)_{\bfx\in\FF_2^m}$. 
Now, in what follows, each parity sum in $\bfp^{(j)}$ is a linear combination $k$ information symbols and is determined by an $k$-tuple $\bfS_j$ for $j\in[m']$.  %Specifically, we do the following. do not consider m'=2

\begin{itemize}\itemindent=-10pt

\item Set $\bfS_0$ to be $(\vzero,\ldots, \vzero)$.
\item For $j\in [1,m'-1]$, we set 
{\small
\begin{equation*}
    S_j=\{\vzero,
\vzero,\vzero,\ldots, \vzero,  \underbrace{\bfd_m}_{2j-th},\bfe_2,\vzero,\ldots,\vzero \},
\end{equation*}}
where $\bfd_m$ is the $2j$-th element of set $\bfS_j$. For example, if $j=2$, we have $\bfS_2=\{\vzero, \vzero,\vzero,\bf d_m, \bfe_2, \ldots,\vzero \}$; if $j=3$, $\bfS_3=\{\vzero, \vzero,\vzero,\vzero,\vzero, \bf d_m, \bfe_2,\vzero,\ldots,\vzero \}$.

\item For the last parity columns,  $S_{m'}=\{\vzero,\bfd_2,\bfd_2,\ldots,\bfd_2,\bfd_1\}$.

In particular, if $m'=2$, we choose $S_1=\{\vzero,\bf \bfd_m,\bfe_2,\vzero\}$ and $S_{2}=\{\vzero,\bfd_2,\bfd_2,\bfd_1\}$.

\item Similar to the Construction A, we use the tuple $\bfS_j$ to generate a parity column $\bfp^{(j)}$. 
If $\bfS_j=(\bfv_0,\ldots, \bfv_{k-1})$ and we consider row $\bfx$, we set $\bfu_i=\bfx+\bfv_i$ and define $p^{(j)}_{\bfx} = \sum^{m}_{i=0}\alpha^{(i,j)}_{\bfx}a^{(i)}_{\bfu_i}$ for some choice of coefficients $\alpha^{(i,j)}_{\bfx}\in\FF_q$ 
\end{itemize}

\vspace{1mm}
\noindent\textit{Repair of Information Node $s$ for $s\in [0,k-1]$}:

\begin{itemize}\itemindent=-10pt
\item If $s\in [k-2]$ satisfies  $2\nmid s$, set helper nodes to be $\cH = \{\bfa^{(i)}: i\in [0,k-1]\setminus\{s\}\}\cup\{\bfp^{(0)},\bfp^{(\frac{s+1}{2}  )}\}$. 
\begin{itemize}\itemindent=-20pt
    \item From $\bfa^{(i)}$, we read $\Read_{\bfa^{(i)}}=\{a^{(i)}_\bfx:\bfx\in\cU\}$.
    \item From $\bfp^{(i)}$ with $i\in\{0,\frac{s+1}{2}\}$, we read $\Read_{\bfp^{(i)}}=\{p^{(i)}_\bfx:\bfx\in\cU\}$.
\end{itemize}
\item If $s\in [k-2]$ satisfies  $2\mid s$, set helper nodes to be $\cH = \{\bfa^{(i)}: i\in [0,k-1]\setminus\{s\}\}\cup\{\bfp^{(0)},\bfp^{(\frac{s}{2}  )}\}$. 
\begin{itemize}\itemindent=-20pt
    \item From $\bfa^{(i)}$, we read $\Read_{\bfa^{(i)}}=\{a^{(i)}_\bfx:\bfx\in\cL'\}$.
    \item From $\bfp^{(i)}$ with $i\in\{0,\frac{s}{2}\}$, we read $\Read_{\bfp^{(i)}}=\{p^{(i)}_\bfx:\bfx\in\cL'\}$.
\end{itemize}
\item If $s=0$, set the helper nodes to be $\cH = \{\bfa^{(i)}: i\in [k-1]\}\cup\{\bfp^{(0)},\bfp^{(m')}\}$. 
\begin{itemize}\itemindent=-20pt
    \item From $\bfa^{(i)}$ with $i\in [k-1]$, we read $\Read_{\bfa^{(i)}}=\{a^{(i)}_\bfx:\bfx\in\cU\}$.
    \item From $\bfp^{(0)}$, we read $\Read_{\bfp^{(0)}}=\{p^{(0)}_\bfx:\bfx\in\cU\}$.
    \item From $\bfp^{(m')}$, we read $\Read_{\bfp^{(m')}}=\{p^{(m')}_\bfx:\bfx\in\cL\}$.
\end{itemize}

\item If $s=m$, set the helper nodes to be $\cH = \{\bfa^{(i)}: i\in [k-1]\}\cup\{\bfp^{(0)},\bfp^{(m')}\}$. 
\begin{itemize}\itemindent=-20pt
    \item From $\bfa^{(i)}$ with $i\in [k-1]$, we read $\Read_{\bfa^{(i)}}=\{a^{(i)}_\bfx:\bfx\in\cL'\}$.
    \item From $\bfp^{(i)}$ with $i\{0,m'\}$, we read $\Read_{\bfp^{(m')}}=\{p^{(0)}_\bfx:\bfx\in\cL'\}$.
\end{itemize}
\end{itemize}

\begin{theorem}\label{thm:zigzag3}
The repair scheme for Construction~C is correct. 
That is, for $s\in [0,k-1]$, we are able to determine $\bfa^{(s)}$ from all reads.
Moreover, the repair scheme has zero skip cost and optimal rebuilding ratio $1/2$. Additionally, the code rate of MDS array codes from Construction~C is $\frac{k}{k+1+\lceil \frac{k}{2}\rceil}$, which is approximately $2/3$.
\end{theorem}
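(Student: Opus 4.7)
The plan is to follow the case analysis of Theorem~\ref{thm:zigzag2}, adapted to Construction~C, where $k$ is decoupled from the sub-packetization exponent $m$. The arithmetic engine is the fact that $\cU'$ is a subgroup of $\FF_2^m$ with coset $\cL'$ under the partition by the first two bits, giving $\cU'+\cL'=\cL'$ and $\cL'+\cL'=\cU'$; analogously $\cU+\cL=\cL$ and $\cL+\cL=\cU$ for the partition by the first bit. The only shift vectors appearing in any $\bfS_j$ are $\vzero, \bfe_2, \bfd_1, \bfd_2$, and $\bfd_m$, with coset memberships $\vzero, \bfd_m \in \cU'$; $\bfe_2, \bfd_1 \in \cL'$; $\vzero, \bfe_2 \in \cU$; and $\bfd_1, \bfd_2 \in \cL$. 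Compiling this table once makes the remaining verifications mechanical.

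For each $s\in[0,k-1]$ I would check two properties of the set $\cI_1\cup\cI_2$ of information symbols appearing in the two parity reads: (a) it contains $a^{(s)}_\bfy$ for every $\bfy\in\FF_2^m$, and (b) for every $i\neq s$ it contains $a^{(i)}_\bfy$ only for $\bfy$ in the same block read from $\bfa^{(i)}$. For odd $s\in[1,k-2]$, reads lie in $\cU$; in $\bfS_{(s+1)/2}$ the only non-zero entries are $\bfd_m$ at position $s$ (which maps $\cU$ to $\cL$, complementing $\bfp^{(0)}$'s coverage of $\cU$) and $\bfe_2$ at position $s+1$ (which fixes $\cU$), while all other positions are $\vzero$. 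For even $s\in[2,k-2]$, reads lie in $\cL'$; in $\bfS_{s/2}$ the entry $\bfe_2$ at position $s$ maps $\cL'$ to $\cU'$ (complementing $\bfp^{(0)}$), and $\bfd_m\in\cU'$ at position $s-1$ fixes $\cL'$. For $s=0$, $\bfp^{(m')}$ is read on $\cL$; since $\bfd_1,\bfd_2\in\cL$, every non-leading entry of $\bfS_{m'}$ maps $\cL$ to $\cU$, matching the systematic reads, while the leading $\vzero$ leaves $a^{(0)}$ on $\cL$ to complement $\bfp^{(0)}$'s contribution on $\cU$. For $s=k-1$, reads lie in $\cL'$; since $\bfd_2\in\cU'$ fixes $\cL'$ at the middle positions of $\bfS_{m'}$ and $\bfd_1\in\cL'$ at position $k-1$ flips $\cL'$ to $\cU'$, the coverage of $\bfa^{(k-1)}$ is again complete. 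In all four cases the systematic reads cancel every non-failed symbol from the two parity sums.

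Zero skip cost is immediate once one observes that, in the lexicographic order of the rows, the index sets $\cU$, $\cL$, and $\cL'$ are each contiguous (the top half, bottom half, and middle half respectively), whereas $\cU'$ is not; the scheme never reads from $\cU'$. Each of the $d=k+1$ helper nodes contributes $2^{m-1}/2^m=1/2$ of its contents, meeting the cutset bound $1/(d-k+1)=1/2$, and the code rate computes directly to $k/(k+1+\lceil k/2\rceil)$, which tends to $2/3$ as $k\to\infty$. I expect the main subtlety to be purely organisational: $\bfe_2$ and $\bfd_m$ lie in opposite cosets of the two relevant partitions, so one must keep close track of which block each case reads from and which shift occupies each position of the relevant $\bfS_j$; the coset table above, combined with the observation that the failed-node position always receives the unique shift that swaps blocks (either $\bfd_m$ for $\cU$ or $\bfe_2$ for $\cL'$), drives every case.
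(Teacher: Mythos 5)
Your proof is correct and follows exactly the route the paper intends: the paper omits the proof of this theorem with the remark that it is ``similar to the proof of Theorem~\ref{thm:zigzag2},'' and your argument is precisely that adaptation, carried out correctly via the coset table for $\vzero,\bfe_2,\bfd_1,\bfd_2,\bfd_m$ with respect to the partitions $\cU/\cL$ and $\cU'/\cL'$. The case analysis, the contiguity observation for $\cU$, $\cL$, $\cL'$, and the rate and rebuilding-ratio computations all match what the paper's proof of Theorem~\ref{thm:zigzag2} does for Construction~B.
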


\begin{proof}
    The proof is similar to the proof of Theorem~\ref{thm:zigzag2} so the proof is omitted.
\end{proof}

\begin{figure*}[!t]
%\centering
 %   \vspace{2mm}
    
%\footnotesize
 %\noindent $(4\times 10,6)$-array code code with skip cost zero.
%\vspace{1mm}{
\centering \scriptsize
\setlength\tabcolsep{1pt} 
    \begin{tabular}{|c|c|c|c|c|c|c|c|c|c|c|}
    \hline
\multirow{2}{*}{$\bfa^{(0)}$}  & \multirow{2}{*}{$\bfa^{(1)}$} & \multirow{2}{*}{$\bfa^{(2)}$} & \multirow{2}{*}{$\bfa^{(3)}$}&\multirow{2}{*}{$\bfa^{(4)}$} & \multirow{2}{*}{$\bfa^{(5)}$} &
      $\bfp^{(0)}$ & $\bfp^{(1)}$ & $\bfp^{(2)}$ & $\bfp^{(4)}$ \\
  & & & & & &  $\bfS_0=(\vzero,\vzero,\vzero,\vzero,\vzero,\vzero)$ & $\bfS_1=(\vzero,\bfd_2,\bfe_2,\vzero,\vzero,\vzero)$ & $\bfS_2=(\vzero,\vzero,\vzero,\bfd_2,\bfe_2,\vzero)$& $\bfS_2=(\vzero,\bfd_2,\bfd_2,\bfd_2,\bfd_2,\bfd_1)$ \\ \hline 
  
  $00_0$ & $00_1$  & {\color{red}$00_2$} & $00_3$ & $00_4$ & $00_5$   &
$00_{0}\boxplus\cdots\boxplus 00_{5}$ & 
$00_{0}\boxplus 11_{1 }\boxplus 01_{2} \boxplus 00_{3 }\boxplus 00_{4} \boxplus 00_{5 }$  & 
$00_{0}\boxplus 00_{1 }\boxplus 00_{2} \boxplus 11_{3 }\boxplus 01_{4} \boxplus 00_{5 }$  & 
$00_{0}\boxplus 11_{1 }\boxplus 11_{2} \boxplus 11_{3 }\boxplus 11_{4} \boxplus 10_{5 }$  \\ 
    
{\color{blue} $01_0$} & {\color{blue}$01_1$}  & {\color{red}$01_2$} & {\color{blue}$01_3$} & {\color{blue}$01_4$} & {\color{blue}$01_5$}   & {\color{blue}$01_{0}\boxplus\cdots\boxplus 01_{5}$}   
  & 
{\color{blue}$01_{0}\boxplus 10_{1 }\boxplus 00_{2} \boxplus 01_{3 }\boxplus 01_{4} \boxplus 01_{5 }$ } & 
$01_{0}\boxplus 01_{1 }\boxplus 01_{2} \boxplus 10_{3 }\boxplus 00_{4} \boxplus 01_{5 }$  & 
$01_{0}\boxplus 10_{1 }\boxplus 10_{2} \boxplus 10_{3 }\boxplus 10_{4} \boxplus 11_{5 }$   \\

 {\color{blue}$10_0$} & {\color{blue}$10_1$}  & {\color{red}$10_2$} & {\color{blue}$10_3$} & {\color{blue}$10_4$} & {\color{blue}$10_5$}   &{\color{blue} $10_{0}\boxplus\cdots\boxplus 10_{5}$   }
 & 
{\color{blue}$10_{0}\boxplus 01_{1 }\boxplus 11_{2} \boxplus 10_{3 }\boxplus 10_{4} \boxplus 10_{5 }$ } & 
$10_{0}\boxplus 10_{1 }\boxplus 10_{2} \boxplus 01_{3 }\boxplus 11_{4} \boxplus 10_{5 }$  & 
$10_{0}\boxplus 01_{1 }\boxplus 01_{2} \boxplus 01_{3 }\boxplus 01_{4} \boxplus 00_{5 }$   \\ 
    
$11_0$ & $11_1$  & {\color{red}$11_2$} & $11_3$ & $11_4$ & $11_5$   &
$11_{0}\boxplus\cdots\boxplus 11_{5}$     & 
$11_{0}\boxplus 00_{1 }\boxplus 10_{2} \boxplus 11_{3 }\boxplus 11_{4} \boxplus 11_{5 }$  & 
$11_{0}\boxplus 11_{1 }\boxplus 11_{2} \boxplus 00_{3 }\boxplus 10_{4} \boxplus 11_{5 }$  & 
$11_{0}\boxplus 00_{1 }\boxplus 00_{2} \boxplus 00_{3 }\boxplus 00_{4} \boxplus 01_{5 }$   
 \\ \hline
    \end{tabular}
\caption{Example of a {$(4\times 10,6)$}-MDS array code described by Construction~C with $m=2$ (see Section~\ref{sec:zigzag}). Suppose information node $\bfa^{(2)}$ (highlighted in {\color{red}red}) fails. We contact nodes $\bfa^{(i)}$ for $i \in\{0,1,3,4,5\}$, $\bfp^{(0)}$ and $\bfp^{(1)}$ and read the contents in {\color{blue}blue}. Here, the skip cost is zero.
    }\label{fig:zigzag3}
\end{figure*}

\begin{example}
    Let $m=2$ and set $k=6$. Hence, $M=4$, $N=10$. We consider an MDS array code with six systematic information columns three parity columns.
    The placement of codesymbols is presented as Fig~\ref{fig:zigzag3}. 
Similarly to Example 1, we index the information nodes using the symbols $a^{(i)}_{\boldsymbol x}$ for $i\in\{0,\ldots,5\}$.
    Furthermore, instead of writing $a^{(i)}_{\boldsymbol x}$, we write it as $\boldsymbol x_{i}$. For example, $00_{0}$ represents the information symbol in information nodes $a^{(0)}$. 
       
Assume that the information node $\bfa^{(2)}$ in Fig.~\ref{fig:zigzag3} has failed. In our scheme, the helper information nodes are $\bfa^{(i)}$ for $i\in\{0,1,3,4,5 \}$ while the helper parity nodes are $\bfp^{(0)}$ and $\bfp^{(1)}$. Specifically, the reads from the parity nodes contain the following information symbols.
    \begin{itemize}
    \item From $\boldsymbol{p}^{(0)}$, we obtain
\begin{align} \nonumber
\mathcal{I}_1& \triangleq 
\{01_0,01_1,{\color{red}01_2},01_3,01_4,01_5\} \cup  \{10_{0},10_1,{\color{red}01_2},10_3,10_4,10_5\} 
        \end{align}

\item From $\boldsymbol p^{(1)}$, we obtain
 \begin{align} \nonumber
 \mathcal {I}_2 &\triangleq
   \{01_0, 10_1, {\color{red}00_2},01_3,01_4,01_5\}\cup
   \{ 10_0,01_1,{\color{red}11_2},10_3,10_4,10_5\}
        \end{align} 
\end{itemize}

 Indeed, we check that 
    \[\cI_1\cup\cI_2 = \{\bfx_{i}: \bfx\in\cL',i\in[0,5]  \setminus \{2\} \}\cup\{\bfx_{2}: \bfx\in{\color{red}\FF_2^2}\}\,,\]

    % which corroborates with the claims in the proof of Theorem~\ref{thm:zigzag1}.
    Since we download the information symbols $\{\bfx_{i}: \bfx\in\cU\}$  from $\bfa^{(i)}$ for $i\in[0,5]  \setminus \{2\}$, respectively, we recovered all the values in $\{\bfx_{2}: \bfx\in{\color{red}\FF_2^2}\}$, i.e., $\bfa^{(2)}$.
    Since we download content only from the rows in $\cL'$, the skip cost is zero.

 Similarly, if the information node $\bfa^{(1)}$ is erased. From Fig.~\ref{fig:zigzag3}, the helper information nodes are $\bfa^{(i)}$ for $i\in [0,5]\setminus \{1\} $ ,
 while the helper parity nodes are also $\bfp^{(0)}$ and $\bfp^{(1)}$. 
  \begin{itemize}
    \item From $\boldsymbol{p}^{(0)}$, we obtain
\begin{align} \nonumber
\mathcal{I}_1& \triangleq 
\{00_0,{\color{red}00_1},00_2,00_3,00_4,00_5\} \cup  \{01_{0},{\color{red}01_1},01_2,01_3,01_4,01_5\} 
        \end{align}

\item From $\boldsymbol p^{(1)}$, we obtain
 \begin{align} \nonumber
 \mathcal {I}_2 &\triangleq  \{ 00_0,{\color{red}11_1},01_2,00_3,00_4,00_5\} \cup
   \{01_0, {\color{red}10_1}, 00_2,01_3,01_4,01_5\}
        \end{align} 
\end{itemize}
 
\noindent By downloading the information symbols
$\{\boldsymbol{x}_{i}: \boldsymbol{x}\in\mathcal{U}\}$ from $\bfa^{(i)}$ for $i\in [0,5]\setminus \{1\} $, it is clear that we can recover all values in $\{\boldsymbol{x}_{1}: \boldsymbol{x}\in\mathbb{F}_2^2\}$. Furthermore, we can check the repair scheme of other information nodes according to Fig.~\ref{fig:zigzag3}, and their skip costs are zero also.
\qedsymbol

\end{example}

%$$------------------$$

\section{SQS-Based Fractional Repetition Codes with Zero Skip Cost}
%\section{Steiner Quadruple Systems with Zero Skip Cost}
\label{sec:fractional}

In this section, we study fractional repetition codes with repair schemes that incur zero skip cost.
As discussed in Section~\ref{sec:prelim-fr}, to construct the resulting array code, it suffices to specify the {\em order} of points in each block. 
Therefore, in what follows, instead of writing blocks as unordered sets, we write each block explicitly as a {\em tuple}.

Also, as previously discussed, we focus on fractional repetition codes resulting from Steiner quadruple systems. 
Here, we describe two constructions. 
In Section~\ref{sec:sqs-recursive}, we look at recursive constructions that yield two infinite classes of SQS with zero skip cost.
Later, in Section~\ref{sec:sqs-differences}, we use the method of differences to construct more SQS with other parameters. 
Here, our SQS satisfy a more stringent property where every point-pair appears in at least two different blocks.

\subsection{Recursive Constructions}\label{sec:sqs-recursive}

Here, our constructions follow closely the original proof by Hanani in 1960 \cite{hanani1960quadruple}. Specifically, in that paper, Hanani provided six recursive constructions that build a larger SQS from smaller ones. Here, we make modifications to two of these recursive constructions so that the resulting SQS has locality two and skip cost zero.

\vspace{1mm}

\noindent{\bf Construction D} (Doubling Construction).
Let $(V,\cB)$ be an $\sqs(v)$ with $V=[v]$. We construct a $\sqs(2v)$-design on the point set $[v]\times \{0,1\}$.
The collection of blocks comprise two groups:
\begin{comment}
    \item $\cB_1$ is the collection $\bigg\{ \Big((v_1,i_1), (v_2,i_2), (v_3,i_3), (v_4,i_4)\Big) : \{v_1,v_2,v_3,v_3\}~\in~\cB,~\sum_{j=1}^4 i_j=0\pmod{2} \bigg \}$. In other words, $\cB_1$ contains $8|\cB|$ blocks;
    \item $\cB_2$ is the collection $\bigg\{ \Big((v_1,0), (v_1,1), (v_2,0), (v_2,1)\Big) : \{v_1,v_2\} \text{ is a two-subset of } V \bigg\}$. In other words, $\cB_2$ contains $\binom{N}{2}$ blocks.
\end{comment}
\begin{itemize}
    \item $\cB_1$ is the collection $\{ ((v_1,i_1), (v_2,i_2), (v_3,i_3), (v_4,i_4)) : \{v_1,v_2,v_3,v_4\}~\in~\cB,~\sum_{j=1}^4 i_j=0\pmod{2}  \}$. In other words, $\cB_1$ contains $8|\cB|$ blocks.
    \item $\cB_2$ is the collection $\{ ((v_1,0), (v_1,1), (v_2,0), (v_2,1)) : \{v_1,v_2\} \text{ is a two-subset of } V \}$. In other words, $\cB_2$ contains $\binom{N}{2}$ blocks.
\end{itemize}

Note that for any 3 elements $\{ (v_1,i_1), (v_2,i_2), (v_3,i_3) \}$, if $v_1\neq v_2\neq v_3\neq v_4$ it is included in $\cB_1$, and otherwise it is included in  $\cB_2$. Hence, every subset of $\sqs(2v)$ containing $3$
element is contained in exactly one block.

\noindent{\em Repair of Block $\bfb\in \cB_1$}: 
Then $\bfb$ is of the form $((v_1,i_1), (v_2,i_2), (v_3,i_3), (v_4,i_4))$. Then we pick the following reads and blocks.
 \begin{itemize}\itemindent=-10pt%\small
    \item ${\sf R}_1 = \{(v_1,i_1), (v_2,i_2)\}$ in block $((v_1,i_1), (v_2,i_2), (v_3,1-i_3), (v_4,1-i_4))$;
    \item ${\sf R}_2 = \{(v_3,i_3), (v_4,i_4)\}$ in block $((v_1,1-i_1), (v_2,1-i_2), (v_3,i_3), (v_4,i_4))$.
\end{itemize}

\noindent{\em Repair of Block $\bfb\in \cB_2$}:
Then $\bfb$ is of the form $((v_1,0), (v_1,1), (v_2,0), (v_2,1))$.
Let $v_3\notin\{v_1,v_2\}$ and we pick the following reads and blocks.
 \begin{itemize}\itemindent=-10pt\small
 \item ${\sf R}_1 = \{(v_1,0), (v_1,1)\}$ in block $((v_1,0), (v_1,1), (v_3,0), (v_3,1))$;
 \item ${\sf R}_2 = \{(v_2,0), (v_2,1)\}$ in block $((v_3,0), (v_3,1), (v_2,0), (v_2,1))$.
 \end{itemize} 

\begin{theorem}\label{thm:sqs-2v}
The repair scheme for Construction~D is correct. 
That is, for $\bfb\in\cB$, we are able to determine $\bfb$ from all reads.
Moreover, the repair scheme has locality two and skip cost zero.
\end{theorem}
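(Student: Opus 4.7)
The plan is to verify three things for each of the two block families $\cB_1$ and $\cB_2$: that the helper blocks named in the scheme actually exist in the construction, that the union of the two reads contains all four points of the failed block, and that in each helper the two read positions are consecutive (giving skip cost zero and locality two).

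First I would handle $\bfb = ((v_1,i_1),(v_2,i_2),(v_3,i_3),(v_4,i_4)) \in \cB_1$. Membership of the two proposed helpers in $\cB_1$ reduces to checking the parity condition in the definition of $\cB_1$: for the first helper the parity sum is $i_1+i_2+(1-i_3)+(1-i_4) = i_1+i_2+i_3+i_4 \pmod 2$, which is $0$ because $\bfb \in \cB_1$; the second helper is symmetric. The underlying quadruple $\{v_1,v_2,v_3,v_4\}$ is in $\cB$ by hypothesis, so both helpers are legitimate blocks. The read ${\sf R}_1$ recovers the first two coordinates of $\bfb$ and ${\sf R}_2$ recovers the last two, so their union is exactly $\bfb$. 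In the first helper the read points occupy positions $1,2$ and in the second helper positions $3,4$; both are consecutive pairs, so each read contributes skip cost zero.

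Next I would handle $\bfb = ((v_1,0),(v_1,1),(v_2,0),(v_2,1)) \in \cB_2$. Since $v \ge 4$, one can choose $v_3 \in V \setminus \{v_1,v_2\}$, and both helper blocks are of the form required for $\cB_2$, hence they belong to the collection. The read ${\sf R}_1 = \{(v_1,0),(v_1,1)\}$ sits in positions $1,2$ of the first helper, while ${\sf R}_2 = \{(v_2,0),(v_2,1)\}$ sits in positions $3,4$ of the second helper, so both reads are contiguous. Their union is again all four points of $\bfb$.

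The only thing that requires a small amount of care is the parity bookkeeping in the $\cB_1$ case, which is the one potential obstacle, but it turns into a one-line check because flipping the last two binary coordinates preserves the mod-$2$ sum. In both cases exactly two helpers are contacted, establishing locality two, and in both cases every read occupies a consecutive pair of positions, establishing total skip cost zero. This gives correctness and the stated repair parameters, completing the proof.
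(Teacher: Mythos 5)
Your proof is correct and takes essentially the same route as the paper: the paper's proof is a one-liner ("in all cases, we simply check that $\Read_1\cup\Read_2=\bfb$; the locality and skip cost properties are readily verified"), and your argument simply makes that verification explicit by confirming the parity constraint for the $\cB_1$ helpers, the existence of $v_3$ for $\cB_2$, and the contiguity of the read positions. One small note for polish: in $\cB_2$ the block is really indexed by an \emph{unordered} pair $\{v_1,v_2\}$ (there are only $\binom{|V|}{2}$ such blocks), so the second helper may actually be stored as $((v_2,0),(v_2,1),(v_3,0),(v_3,1))$ rather than $((v_3,0),(v_3,1),(v_2,0),(v_2,1))$; this does not affect your conclusion since $\{(v_2,0),(v_2,1)\}$ still occupies a consecutive pair of positions in either ordering, but it is worth being aware that the positions could be $1,2$ rather than $3,4$.
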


\begin{proof}
    In all cases, we simply check that $\Read_1\cup\Read_2=\bfb$.
    The locality and skip cost properties are readily verified.
\end{proof}

\begin{example}
    Let $v=4$ with $V=[4]$. Clearly, if $\cB$ has the single block $(1,2,3,4)$, we have an $\sqs(4)$.
    Construction~D builds an $\sqs(8)$. 
    %Then, in $\cB_1$, we have the eight blocks
    %\begin{eqnarray*}{cccc}
    %    (1_0,2_0,3_0,4_0), & (1_0,2_0,3_1,4_1), & (1_0,2_1,3_0,4_1), & (1_0,2_1,3_1,4_0), \\
    %    (1_1,2_0,3_0,4_1), & (1_1,2_0,3_1,4_0), & (1_1,2_1,3_0,4_0), & (1_1,2_1,3_1,4_1). \\
    %\end{eqnarray*}
    Then we have eight and six blocks in $\cB_1$ and $\cB_2$, respectively.
    We present the placement of the 14 blocks in Fig.~\ref{fig:sqs}(b). 
    
    We illustrate the repair scheme assuming that the node / block $(1_0,1_1,2_0,2_1)$ has failed.
    Since the block belongs to $\cB_2$, we pick $v_3=3$ and read
\begin{itemize}%\itemindent=-10pt
    \item ${\sf R}_1 = \{1_0, 1_1\}$ from the block $(1_0,1_1,3_0,3_1)$;
    \item ${\sf R}_2 = \{2_0, 2_1\}$ from the block $(2_0,2_1,3_0,3_1)$.
    \end{itemize} 
    For both blocks, we see the repair has zero skip cost.
    
    In Fig.~\ref{fig:sqs}(b), we illustrate the repair for a block from $\cB_1$ and we see that the skip cost is zero too. \qedsymbol
\end{example}

Similarly, we modify another Hanani's recursive construction.
Unfortunately, the repair scheme becomes significantly involved as we have more cases to handle. Also, due to space constraints, we only describe the ordering of blocks and defer the details of the repair scheme to Appendix~\ref{app:repair-c}.

\vspace{1mm}
\noindent{\bf Construction E} ($(3v-2)$-Construction).
Let $(V,\cB)$ be an $\sqs(v)$ with $V=\{\infty\}\cup [N]$. We construct a $\sqs(3v-2)$-design on the point set $\{\infty\}\cup [N]\times \{0,1,2\}$.

The collection of blocks comprises six groups:
\begin{itemize}
\item $\cB_1$ is the collection $\{ ((i_1,v_1), (i_2,v_2), (i_3,v_3), (i_4,v_4)) : \{v_1,v_2,v_3,v_4\} \in \cB, ~\sum_{j=1}^4 i_j=0\pmod{3} \}$. In other words, $\cB_1$ contains $27|\cB|$ blocks.

\item $\cB^{1}_2$ is the collection $\{ ((i_1,v_1), \infty, (i_3,v_3), (i_2,v_2) ) : \{\infty,v_1,v_2,v_3\} \in \cB,~v_1< v_2< v_3,~\sum_{j=1}^3 i_j=0\pmod{3}, i_1= i_2 \}$ and
\item $\cB^{2}_2 = \{ ((i_1.v_1), \infty, (i_2,v_2), (i_3.v_3)) : \{\infty,v_1,v_2,v_3\} \in \cB, ~v_1< v_2< v_3, \sum_{j=1}^3 i_j=0\pmod{3} \}$. In other words, $\cB^{1}_1\cup \cB^2_2$ contains $9|\cB|$ blocks.

\item $\cB_{3}$ is the collection $\{ ((i,v_1), (i+1,v_2), (i,v_3), (i+2,v_2)) : \{\infty,v_1,v_2,v_3\} \in \cB, i\in \{0,1,2\} \}$. In other words, $\cB_3$ contains $9|\cB|$ blocks.

\item  $\cB_{4}$ is the collection $\{ ((i,v_1), (i+1,v_1), (i+1,v_2), (i,v_2)) : \{v_1,v_2\} \text{ is a two-subset of } [0,N-1] , i\in \{0,1,2\}  \}$, $v_1< v_2$.
In other words, $\cB_4$ contains $3\binom{N}{2}$ blocks.

\item $\cB_{5}$ is the collection $\{ \infty, ((0,v), (1,v), (2,v)) : v\in [N] \}$. In other words, $\cB_4$ contains $N$ blocks.
\end{itemize}

\begin{theorem}\label{thm:sqs-3v}
The repair scheme for Construction~E is correct. 
That is, for $\bfb\in\cB$, we are able to determine $\bfb$ from all reads.
Moreover, the repair scheme has locality two and skip cost zero.
\end{theorem}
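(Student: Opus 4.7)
The plan is to mirror the structure of the proof of Theorem \ref{thm:sqs-2v}: for each of the six block families $\cB_1, \cB_2^1, \cB_2^2, \cB_3, \cB_4, \cB_5$, I will exhibit two helper blocks and designate, inside each helper, a pair of \emph{consecutive} positions whose entries together reconstitute the failed block $\bfb$. Once the pairs are displayed, correctness ($\Read_1\cup\Read_2=\bfb$), locality two (only two helpers consulted), and skip cost zero (each $\Read_j$ occupies two adjacent coordinates of its helper, so $\cost(\Read_j,\cdot)=2-1-(2-1)=0$) will all be immediate verifications.

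I will carry out the case analysis in the following order. For $\bfb=((i_1,v_1),(i_2,v_2),(i_3,v_3),(i_4,v_4))\in\cB_1$, split $\bfb$ into the front pair $\{(i_1,v_1),(i_2,v_2)\}$ and the back pair $\{(i_3,v_3),(i_4,v_4)\}$; since $\{v_1,v_2,v_3,v_4\}\in\cB$, the triples $\{v_1,v_2,v_3\}$ and $\{v_2,v_3,v_4\}$ each extend uniquely in the ambient $\sqs(v)$, and by adjusting the third coordinate index modulo $3$ we land in another block of $\cB_1$ in which the required pair still sits in positions $1,2$ or $3,4$. For $\bfb\in\cB_2^1\cup\cB_2^2$ (which contains $\infty$ at position $2$), I split $\bfb$ at position $2$: the first pair uses positions $1,2$ (ending at $\infty$) and the second uses positions $3,4$, and I locate helpers where the complementary $\infty$-adjacent pair recurs — for this I use the very ordering rule ($v_1<v_2<v_3$ and the $i_1=i_2$ restriction in $\cB_2^1$) that the construction enforces, which guarantees that the needed pair appears consecutively in exactly one other block. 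For $\bfb\in\cB_3$, the pattern $((i,v_1),(i+1,v_2),(i,v_3),(i+2,v_2))$ splits naturally into the $(1,2)$ and $(3,4)$ positional pairs, which appear in another $\cB_3$ block (with shifted $i$) and in a $\cB_4$ block on $\{v_2\}$-duplicates, respectively. For $\bfb\in\cB_4$, the repair uses two other $\cB_4$ blocks obtained by cycling the index $i$. Finally, for $\bfb\in\cB_5=(\infty,(0,v),(1,v),(2,v))$, I read $\{\infty,(0,v)\}$ from a suitable $\cB_2$ block (where $\infty$ sits in position $2$ adjacent to $(0,v)$) and $\{(1,v),(2,v)\}$ from a $\cB_4$ block of the form $((1,v),(2,v),(2,v'),(1,v'))$ for any $v'\neq v$.

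The main obstacle is the $\cB_2^1, \cB_2^2, \cB_5$ cases, because every read that must include the point $\infty$ forces its partner to occupy the unique position adjacent to $\infty$ in the helper block. This is why the construction specifies $\infty$ at position $2$ (rather than at an extremal position) and imposes the asymmetric ordering $v_1<v_2<v_3$ together with the $i_1=i_2$ clause in $\cB_2^1$: I will need to check, one subcase at a time, that these constraints produce exactly one helper block of the appropriate family in which the required $\infty$-adjacent pair appears in positions $(1,2)$ or $(2,3)$, and never force me into a non-consecutive read. The remaining verifications — that triples are covered exactly once (which is Hanani's classical check and already implicit in the construction), that the total number of helpers is two, and that the two positions used in each helper are consecutive — are routine once the helper blocks have been listed.
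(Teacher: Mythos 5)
Your overall strategy — splitting a failed block $\bfb$ at position 2 into two consecutive pairs and locating, for each pair, a helper block in which the pair sits in adjacent positions — is exactly the paper's strategy, and your treatment of $\cB_1$, $\cB_4$, and $\cB_5$ is essentially correct (modulo an edge case for $v=N-1$ in $\cB_5$ that you gloss over but the paper handles with three sub-cases). However, there is a concrete error in the $\cB_3$ case and a significant under-specification in the $\cB_2^1,\cB_2^2$ cases.

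For $\bfb=((i,v_1),(i+1,v_2),(i,v_3),(i+2,v_2))\in\cB_3$, you propose to recover the back pair $\Read_2=\{(i,v_3),(i+2,v_2)\}$ from a $\cB_4$ block ``on $\{v_2\}$-duplicates.'' This cannot work. Every $\cB_4$ block has the shape $((j,u_1),(j+1,u_1),(j+1,u_2),(j,u_2))$, so its three adjacent pairs are $\{(j,u_1),(j+1,u_1)\}$, $\{(j+1,u_1),(j+1,u_2)\}$, $\{(j+1,u_2),(j,u_2)\}$: each such pair either has equal second ($v$-)coordinates or equal first ($i$-)coordinates. Your target pair has $v_2\ne v_3$ and index difference $2$, so it appears adjacently in no $\cB_4$ block. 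The paper instead takes both $\Read_1$ and $\Read_2$ for a $\cB_3$ failure from \emph{other $\cB_3$ blocks}, relying on the fact that the index triple $\{i,i+1,i+2\}$ is cyclic. This is a genuine miss, not a presentational difference.

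For $\cB_2^1$ and $\cB_2^2$ you only explain how to obtain the $\infty$-adjacent pair $\Read_1$; you never say which family the helper for the back pair $\Read_2$ (positions $3,4$) lives in. This is precisely where the asymmetric ordering and the $i_1=i_2$ clause earn their keep: for $\cB_2^1$ the back pair $\{(i_2,v_3),(i_2,v_2)\}$ shares the same index and is found in a $\cB_4$ block, whereas for $\cB_2^2$ the back pair $\{(i_2,v_2),(i_3,v_3)\}$ has $i_2\ne i_3$ and is found in a $\cB_3$ block, with the cases $i_3=i_2+1$ and $i_3=i_2+2$ treated separately. Your plan acknowledges ``the main obstacle'' in these cases but does not actually resolve it; the resolution depends on cross-family reads ($\cB_2\to\cB_4$ and $\cB_2\to\cB_3$), which your sketch does not surface. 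Finally, for $\cB_1$ your remark that the triples $\{v_1,v_2,v_3\}$ and $\{v_2,v_3,v_4\}$ ``extend uniquely in the ambient $\sqs(v)$'' is a red herring: the helper blocks for $\cB_1$ keep the same underlying quadruple $\{v_1,v_2,v_3,v_4\}\in\cB$ and merely reshuffle the $i$-coordinates on the two unread positions.
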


\subsection{Method of Differences}\label{sec:sqs-differences}

Besides the quadruple systems described by Hanani, many other classes of SQS were studied (see survey by Lindner and Rosa~\cite{lindner1978steiner}). Of interest, we look at SQS with large automorphism groups.
For such systems of order $v$, the vertex set is defined over the cyclic group $\ZZ_v$ or $\ZZ_{v-1}$ with a fixed point $\infty$. In other words, $V=\ZZ_v$ or $V=\ZZ_{v-1}\cup\{\infty\}$.
Then we have a set of {\em base} blocks $\cS$ and all blocks are obtained via cyclic shifts of blocks in $\cS$.
In other words, $\cB = \{\bfb + i \,:\, \bfb\in\cS, i\in\ZZ_v\}$ or $\cB = \{\bfb + i \,:\, \bfb\in\cS, i\in\ZZ_{v-1}\}$.

Now, in this section, we provide quadruple systems with a stronger property.
\begin{definition}
An $\sqs(v)$ $(V,\cB)$ has {\em repeated adjacent pairs} if
for all $2$-subsets $P$ of $V$, the pair $P$ appears as an adjacent pair in at least two different blocks in $\cB$.
\end{definition}

Observe that if an $\sqs(v)$ has repeated adjacent pairs, then the $\sqs(v)$ has locality two and skip cost zero.
Next, we employ the {\em method of differences} to construct such quadruple systems.

\begin{definition}
Let $(a,b,c,d)$ be an ordered block defined over a cyclic group $\ZZ_g$. 
Its {\em difference list} $\diff(a,b,c,d)$ is given by $ \langle |b-a|, |c-b|, |d-c|\rangle$.
Here, $|b-a|= \min\{b-a\pmod g, a-b\pmod g\}$ and we use $\langle\cdot\rangle$ to denote a multi-set.

Given a collection of blocks $\cS$, we use $\diff(\cS)$ to denote $\bigcup_{\bfb\in \cS} \diff(\bfb)$\,.
As with difference lists, we are taking a multi-union of multi-sets.
\end{definition}

\begin{proposition}
\label{prop:sqs-differences}
Consider a $\sqs(v)$ with base blocks $\cS$.
\begin{enumerate}[(i)]
\item Suppose that $V=\ZZ_v$. If the difference $i$ appears at least twice in $\diff(\cS)$ for all $0<i<v/2$ and the difference $i$ appears at least once in  $\diff(\cS)$, then the $\sqs(v)$ has repeated adjacent pairs.
\item Suppose that $V=\ZZ_{v-1}\cup \{\infty\}$. If the difference $i$ appears at least twice in $\diff(\cS)$ for all $0<i\le (v-1)/2$ and $\infty$ appears at least two different blocks in $\cS$, then the $\sqs(v)$ has repeated adjacent pairs.
\end{enumerate}
\end{proposition}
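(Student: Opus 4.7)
The plan is to translate ``$\{x,y\}$ is an adjacent pair of some block in $\cB$'' into an arithmetic condition on $\cS$, then count. Every block of $\cB$ has the form $\bfb + j$ for some $\bfb = (b_0,b_1,b_2,b_3) \in \cS$ and shift $j$ in the cyclic group, and $\{x,y\}$ appears as the $i$-th adjacent pair of $\bfb + j$ precisely when $|b_{i+1}-b_i| = |y-x|$; in that case the shift $j$ is determined by the desired orientation of the pair. Thus every occurrence of the value $d$ in $\diff(\cS)$, indexed by a pair $(\bfb,i)$, contributes one adjacency of each 2-subset of cyclic distance $d$ in some specific block of $\cB$, and the task reduces to checking that distinct occurrences produce distinct blocks in $\cB$.

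For (i), fix $\{x,y\}$ with $d = |y-x|$. When $0 < d < v/2$, the hypothesis supplies two distinct positions $(\bfb^{(1)}, i_1) \ne (\bfb^{(2)}, i_2)$ at which the difference $d$ is realized, each of which produces a block of $\cB$ carrying $\{x,y\}$ adjacent. I would verify that these two blocks cannot coincide: if both equalled some $\bfc$, then $\bfc$ would exhibit $\{x,y\}$ as an adjacent pair at two different positions $i_1,i_2$, which is impossible in a $4$-tuple of distinct entries (two overlapping adjacent slots force a coordinate to repeat, while the only disjoint pair of adjacent slots, $\{0,1\}$ and $\{2,3\}$, would force $\bfc$ to use only two values); and in the remaining case $i_1 = i_2$ the two distinct base blocks would have to lie in the same cyclic orbit, contradicting the standing convention that base blocks represent distinct orbits. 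When $d = v/2$, a single occurrence at $(\bfb,i)$ already suffices, because $v/2 \equiv -v/2$ and so both shifts $j_1 = x - b_i$ and $j_2 = y - b_i$ realize $\{x,y\}$ as the $i$-th adjacent pair of $\bfb + j_r$; since $j_1 - j_2 = x-y \ne 0$, the two tuples $\bfb + j_1$ and $\bfb + j_2$ are different blocks of $\cB$.

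Part (ii) mimics part (i), split according to whether the pair contains $\infty$. For $\{x,y\} \subseteq \ZZ_{v-1}$ the argument of (i) applies verbatim on $\ZZ_{v-1}$. For a pair $\{\infty,x\}$, the fixed point $\infty$ is unmoved by every shift, so each base block $\bfb$ containing $\infty$ yields, by the unique shift that carries an adjacent neighbour of $\infty$ in $\bfb$ to $x$, a block of $\cB$ in which $\{\infty,x\}$ is an adjacent pair; since by hypothesis there exist two distinct base blocks through $\infty$, lying in distinct orbits, the two resulting blocks of $\cB$ are distinct.

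I expect the main obstacle is not arithmetic but bookkeeping: certifying that two different occurrences in $\diff(\cS)$, or two base blocks through $\infty$, really produce two \emph{different} blocks of $\cB$ rather than collapsing to one. The key structural facts that make this work are that a $4$-tuple of distinct entries cannot exhibit the same $2$-subset as an adjacent pair twice; that base blocks of a cyclic SQS are chosen from distinct orbits, so different base blocks cannot coincide after shifting; and that the difference $v/2$ is self-conjugate in $\ZZ_v$, which allows a single occurrence to count for both orientations in the even-order case.
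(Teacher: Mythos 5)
Your proof is correct and follows the same strategy as the paper's: fix $\{x,y\}$ with distance $d=|x-y|$, locate occurrences of $d$ in $\diff(\cS)$, translate the base blocks so the pair sits adjacently, and conclude that two distinct blocks carry the adjacency. You are more careful than the paper in two places that matter. First, you explicitly handle the self-conjugate difference $d=v/2$, where a single occurrence at position $(\bfb,i)$ produces two translates via $j_1=x-b_i$ and $j_2=y-b_i$; the paper's proof never separates this case from $0<d<v/2$, even though the proposition's hypothesis treats $d=v/2$ with a weaker requirement (``at least once'' rather than ``at least twice''), so that case is genuinely unaddressed there. Second, you verify via the overlapping-slot / disjoint-slot analysis of a $4$-tuple with distinct entries that the two produced blocks cannot coincide, whereas the paper simply asserts that ``$(x,y)$ belongs to different blocks'' without justification. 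One small caveat shared by both your argument and the paper's: when you conclude from $j_1\ne j_2$ that $\bfb+j_1$ and $\bfb+j_2$ are different \emph{blocks}, this requires that $\bfb$ not be stabilised (as a set) by the shift $j_1-j_2$. Short orbits do occur in the paper's own instances (the base block $(0,11,22,\infty)^*$ in their $\sqs(34)$ has orbit length $11$, not $33$), so a fully rigorous treatment would rule this out for the shifts in question or dispose of it directly; this is a genuine omission in the paper's proof as well, not a defect peculiar to yours.
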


Due to space constraints, we defer the proof to Appendix~\ref{app:sqs-differences}.
Instead, we provide an example for the case $v=26$.

\begin{example}\label{exa:sqs26}
We consider an $\sqs(26)$ defined over $\ZZ_{25}\cup\{\infty\}$.
Specifically, \cite{ji2002improved} provided the following 26 base blocks. 
Here, we reorder the points in some blocks in order to fulfill the conditions of Proposition~\ref{prop:sqs-differences}.
\[
\small
\arraycolsep=1.2pt
\begin{array}{llllll}
(0, 1, 3,\infty), &  (0, 4, 11, \infty), & (0, 5, 13, \infty), & (0, 6, 15, \infty), & (0, 1, 2, 5), \\
(0, 1, 6, 7), & (0, 1, 8, 9), &  (0, 1, 10, 11), & (0, 1, 12, 22), & (0, 1, 13, 21),\\ 
(0, 1, 14, 23), & (0, 2, 4, 12),& (0, 2, 6, 9), &  (0, 2, 7, 17), & (0, 2, 8, 22), \\
(0, 2, 11, 18), & (0, 2, 13, 19) & (0, 2, 14, 21), & (0, 2, 15, 20), &  (0, 3, 6, 10), \\
(0, 3, 8, 17), & (0, 3, 9, 14), & (0, 3, 12, 18), & (0, 3, 13, 20), & (0, 14, 8, 4), \\
(0, 4, 9, 13).
\end{array}
\]
For example, we observe that $\infty$ is in both the first two blocks.
The difference one appears twice in the block $(0,1,2,5)$,
while the difference five appears once in the block $(0,2,8,17)$ and once in the block $(0,4,9,13)$.
%Specifically, the blocks highlighted in {\color{blue}blue} contains all nonzero differences at least twice.
%\hm{Wenqin: can you add the blocks for $\sqs(26)$ here?}
\end{example}

\subsection{Proof for Theorem~\ref{thm:sqs-all}}

Recall that \cite{hanani1960quadruple} yields that an $\sqs(v)$ whenever $v\equiv 2$ or $4 \pmod{6}$ (see Theorem~\ref{thm:hanani}).
Applying Construction~B, we then obtain an $\sqs(V)$ with locality two and skip cost zero whenever $V\equiv 4$ or $8\pmod{12}$. Similarly, Construction~C yields an $\sqs(V)$ with locality two and skip cost zero whenever $V\equiv 4$ or $10\pmod{18}$.
These two constructions yield the infinite set of parameters in Theorem~\ref{thm:sqs-all}.

Next, for $v\in\{26,34,38\}$, we use Proposition~\ref{prop:sqs-differences}. Example~\ref{exa:sqs26} provides the base blocks for $\sqs(26)$, while Appendix~\ref{app:sqs-differences} provides the base blocks for $\sqs(34)$ and $\sqs(38)$.

Finally, we provide an explicit construction of a $\sqs(14)$ with zero skip cost in Appendix~\ref{app:sqs-14}. This completes the proof for Theorem~\ref{thm:sqs-all}.

\bibliographystyle{IEEEtran}
\bibliography{references}

\newpage

\phantom{SPACE TO FORCE NEW PAGE}

\appendices

\section{Proof of MDS Property of Construction A}
\label{app:mds}

Let us recall Construction A.
For $m\geq2$, set $k=m+1$, $M=2^m$ and $N=2k=2(m+1)$.
Then any codeword has $k=m+1$ information columns $\bfa^{(0)},\bfa^{(1)}\ldots, \bfa^{(m)}$, while the remaining $k=m+1$ are parity nodes $\bfp^{(0)}, \bfp^{(1)},\ldots, \bfp^{(m)}$.

Here, we use $\bfA$ to represent the $2^m\times k$ array $\left(a^{(i)}_{\bf x}\right)$.
Also, we use a column vector $\bfB$ of length $Mk$, where the column $i\in[0,k-1]$ of $\bfA$ is in the row set $ [iM,(i+1)M-1]$ of $\bfB$. 
Each systematic node $i$, or $\bfa^{(i)}$, can be represented as $\bfV_{i}\bfB$,  where $\bfV_i=[\vzero_{p\times pi}, \bfI_{p\times p},\ldots,\vzero_{p\times p(k-i-1)}]$.

For a fixed $i,j$ and any $\bfx\in \FF^m_2$, we set $\alpha^{(i,j)}_\bfx=x_{j,i}$. Similarly, each parity node $p^{(i)}$ can be obtained by $\bfW_i\bfB$, where $\bfW_{i}=[x_{i,0}\bfP_{i,0}, x_{i,1}\bfP_{i,1}, \ldots, x_{i,k-1}\bfP_{i,k-1}]$, $\bfP_{0,j}=\bfI_{p\times p}$ and $\bfP_{i\neq0,j}$ are permutation matrices (not necessarily distinct) of size $p\times p$ for $i,j\in[0,k-1]$.  
The permutation matrix $\bfP_{i,j}=(p^{(i,j)}_{{\bfx},{\bfy}})$ is defined as $p^{(i,j)}_{{\bfx},{\bfy}}=1$ if and only if $a^{(j)}_{\bfy}=a^{(0)}_{\bfx} +v_{i,j}$, where $v_{i,j}$ is the $(j+1)$-th element of set $\boldsymbol{S}_i$.

Now, to demonstrate the MDS property,
it suffices to show that there is an assignment for the indeterminates $\{x_{i,j}\}$ in the field $\FF_q$, such that for any subset $\{s_1,s_2,\ldots,s_k\} \subset [0,k-1]$, the matrix $\bfW=[\bfW^T_{s_1}, \bfW^T_{s_2},\ldots,\bfW^T_{s_k} ]$ is of full rank.

It is clear that $\bfW=[\bfV^T_{0},\cdots,\bfV^T_{k-1}]$ is of full rank. 
Let $t\in[1,k]$ be an integer.
Assume that $[\bfW^T_{i_1},\bfW^T_{i_2},\ldots,\bfW^T_{i_t}]$ is a submatrix of $\bfW$, i.e.,  $\{i_1,i_2,\ldots,i_t\} \subset 
\{s_1,s_2,\ldots,s_k\} $. This means that there exists $0\leq j_1<j_2<\ldots<j_t \notin\{s_1,s_2,\ldots,s_k\}$ such that $V^T_{j_1}, V^T_{j_2},\ldots V^T_{j_t}$ is not a submatrix of $W$. 
 In this case, $\bfW$ is of full rank if and only if the submatrix
\begin{equation}\label{PCmatrix}
    \bfD_{T^1,T^2}=\left(\begin{array}{ccccccccc}
 x_{i_1j_1}\bfP_{i_1j_1}  &   x_{i_1j_2}\bfP_{i_1j_2} &\ldots &   x_{i_1j_t}\bfP_{i_1j_t}  \\
  x_{i_2j_1}\bfP_{i_2j_1} & x_{i_2j_2}\bfP_{i_2j_2} & \cdots &   x_{i_2j_t}\bfP_{i_2j_t} \\
   \vdots  &  \vdots  & \ddots &   \vdots \\
x_{i_tj_1}\bfP_{i_tj_1}  & x_{i_tj_1}\bfP_{i_tj_1} & \ldots &   x_{i_tj_t}\bfP_{i_tj_t}
    \end{array}\right),
\end{equation}
is of full rank, where $T^1=\{i_1,i_2,\ldots,i_t\}$, $T^2=\{j_1,j_2,\ldots,j_t\}$. This is equivalent to $\det(\bfD_{T^1,T^2})\neq 0$. Note that $\deg(\det(\bfB_{T^1,T^2}))=pt$ and the coefficient of $\prod^{t}_{z=1}x_{i_zj_z}^p$ is $\det(\prod^{t}_{z=1}\bfP_{i_zj_z})\in\{1,-1\}$.

Let $\mathcal{T}=\{\{i_1,i_2,\ldots,i_t\} \mid 0\leq i_1<i_2<\ldots<i_t\leq k-1 \}$. 
 Obviously, $|\mathcal{T}|=\binom{k}{t}$.
Let $T^1$ and $T^2$ be a subset of set $\mathcal{T}$ with size $t$, and 
define the polynomial 
\begin{equation}
Y(x_{0,0},x_{0,1},\ldots,x_{k-1,k-1})=\prod_{T^1\in \mathcal{T}}  \prod_{T^2\in \mathcal{T}} \det (\bfD_{T^1,T^2})\,.
\end{equation}
The result follows if there are elements $\lambda_{0,0},\ldots,\lambda_{k-1,k-1}$ in $\FF$
such that $Y(x_{0,0},x_{0,1},\ldots,x_{k-1,k-1})\neq~0$.
 Assume that there exists a largest nonzero term $\prod^{k-1}_{i=0,j=0} x_{i,j}^{y_{i,j}}$ in the polynomial $Y(x_{0,0},x_{0,1},\ldots,x_{k-1,k-1})$, where $\deg(x_{i,j})=y_{i,j}$. 
 We apply the Combinatorial Nullstellensatz~\cite{alon1999combinatorial}, If $q\geq \max\{y_{i,j}\}+1$, then there exists elements $\lambda_{0,0}$,$\ldots$,
$\lambda_{k-1,k-1}\in\FF$
such that $Y(x_{0,0},x_{0,1},\ldots,x_{k-1,k-1})\neq 0$.

First, we consider the selection of columns of the given parity matrices.
For a given subset $T^1=\{i_1,i_2,\ldots,i_t\}$ and any subset $T^2$, we have
$\deg(x_{i_1,0})=\binom{k-1}{t-1}$, $\deg(x_{i_1,1})=\binom{k-2}{t-1}$, $\cdots$, $\deg(x_{i_1,k-t})=\binom{t-1}{t-1}$,
$\deg(x_{i_2,1})=\binom{k-2}{t-2}$, $\deg(x_{i_2,2})=\binom{2}{1}\binom{k-3}{t-2}$, $\deg(x_{i_1,3})=\binom{3}{1}\binom{k-4}{t-2}$, $\cdots$, $\deg(x_{i_1,k-t+1})=\binom{t-2}{t-2}\binom{k-t+1}{1}$,
$\deg(x_{i_3,2})=\binom{k-3}{t-3}$, $\deg(x_{i_3,3})=\binom{3}{2}\binom{k-4}{t-3}$, $\deg(x_{i_3,4})=\binom{4}{2}\binom{k-5}{t-3}$, $\cdots$, $\deg(x_{i_3,k-t+2})=\binom{t-3}{t-3}\binom{k-t+2}{2}$,
$\cdots$,
$\deg(x_{i_t,t-1})=\binom{k-t}{0}$, $\deg(x_{i_t,t})=\binom{t}{t-1}\binom{k-t-1}{0}$, $\deg(x_{i_t,t+1})=\binom{t+1}{t-1}\binom{k-t-2}{0}$, $\cdots$, $\deg(x_{i_t,k-1})=\binom{k-1}{t-1}\binom{0}{0}$.
Note that for the $i_\ell$ parity matrix,  when $ j \leq \ell-2$, the term $x_{i_\ell,j}$ does not exist; when $\ell-1 \leq j \leq k-t+\ell-1$, the term $x_{i_\ell,j}$ exists. Therefore, there exists a term $\prod^t_{\ell=1}\prod^{k-t+\ell-1}_{j=\ell-1}x_{i_{\ell},j}^{\binom{j}{\ell-1}\binom{k-(j+1)}{t-\ell}p}$ with the largest degree being $\det(\bfD_{T^1,T^2})$.

Next, we consider the case of any subset $T^1$.
For a given $0\leq \beta \leq k-1$, we need to calculate the number $Q_{\beta,\ell}$ of the subset $T^1=\{i_1,i_2,\cdots,i_t\}$ that satisfies $i_\ell=\beta$ for $\ell\in[t]$. According to the sorting order, $\beta$ appears in a position whose index is less than $\beta+1$ of subset $T^1$, i.e., $\ell\leq\beta+1$. Hence, we obtain $
    Q_{\beta,\ell}=\binom{\beta}{\ell-1} \binom{k-(\beta+1)}{t-\ell}$.
For example, if $\beta=0$, $\beta$ occurs only in the first position of each subset $T^1$, i.e., $i_1=0$. Therefore, we have $Q_{\beta=0,\ell=1}=\binom{k-1}{t-1}$. It is clear that $\deg(x_{0,j})=p\binom{k-1}{t-1} \binom{k-(j+1)}{t-1}$ for $j\in[0,k-1]$.

For fixed $\beta\in[0,k-1]$ and $j\in[0,k-1]$, we find that the largest degree of $x_{\beta,j}$ in a term $\prod^{k-1}_{i,j=0}x_{i,j}^{y_{i,j}}$  is
 $\sum^{t}_{\ell=1}p Q_{\beta,\ell} \binom{j}{\ell-1}\binom{k-(j+1)}{t-\ell}=\sum^{t}_{\ell=1} p\binom{\beta}{\ell-1} \binom{k-(\beta+1)}{t-\ell} \binom{j}{\ell-1}\binom{k-(j+1)}{t-\ell} $ .
By the following inequality
{\footnotesize\begin{eqnarray}\label{eq:com}
\nonumber
& & \sum^{t}_{\ell=1} \binom{\beta}{\ell-1} \binom{k-(\beta+1)}{t-\ell} \binom{j}{\ell-1}\binom{k-(j+1)}{t-\ell} \\ \nonumber
 &\leq&   \left(\sum^{t}_{\ell=1} \binom{\beta}{\ell-1} \binom{k-(\beta+1)}{t-\ell} \right)\left(\sum^{t}_{\ell=1} \binom{j}{\ell-1}\binom{k-(j+1)}{t-\ell}  \right) \\  \nonumber
&\leq & \binom{k-(\beta+1)+\beta}{t-1}\binom{k-(j+1)+j}{t-1}=\binom{k-1}{t-1}^2,
\end{eqnarray}}
\noindent\hspace*{-3mm} we have that $\max\{y_{i,j}\}= p\binom{k-1}{t-1}^2$.
The equation holds due to the identity $\sum^k_{i=0}\binom{n}{i}\binom{m}{k-i}=\binom{n+m}{k}$.  In fact, we can find a largest term $x_{0,0}^{ p\binom{k-1}{t-1}^2}\prod^{k-1}_{i,j=1}y_{i,j}$ of the polynomial $Y(\lambda_{0,0},\lambda_{0,1},\ldots,\lambda_{k-1,k-1})$ and its coefficient is nonzero as the result of the product of the permutation matrix.
When $t=\lfloor\frac{k-1}{2}\rfloor+1=\lceil\frac{k}{2}\rceil$, we have the maximum value $p\binom{k-1}{t-1}^2$ such that the function 
$Y(\lambda_{0,0},\lambda_{0,1},\ldots,\lambda_{k-1,k-1})\neq 0 $.

\section{Repair Scheme for Construction C}
\label{app:repair-c}

\noindent{\em Repair of Block $\bfb\in \cB_1$}: 
Then $\bfb$ is of the form $((v_1,i_1), (v_2,i_2), (v_3,i_3), (v_4,i_4))$. Then we pick the following reads and blocks.
\begin{itemize}
    \item ${\sf R}_1 = \{(i_1,v_1), (i_2,v_2)\}$ in the block $\bfb_1=((i_1,v_1), (i_2,v_2), (i_3+1,v_3), (i_4+2,v_4))$;
    \item ${\sf R}_2 = \{(i_3,v_3), (i_4,v_4)\}$ in the block $\bfb_2=((i_1+1,v_1), (i_2+2,v_2), (i_3,v_3), (i_4,v_4))$.
\end{itemize}

\noindent{\em Repair of Block $\bfb\in \cB^1_2$}: 
Then $\bfb$ is of the form $((i_1,v_1), \infty, (i_3.v_3), (i_2,v_2) )$ with $i_1=i_2$. Then $i_1+i_2+i_3=0\pmod{3}$ implies that $i_1=i_2=i_3$.
\begin{itemize}
    \item ${\sf R}_1 = \{ (i_1,v_1),\infty\}$ in the block $\bfb_1=((i_1,v_1), \infty, (i_2+1,v_2), (i_3+2,v_3)) \in \cB_2^2$;
    \item ${\sf R}_2 = \{(i_2,v_3), (i_2,v_2)\}$ in the block $\bfb_2=(i_2-1,v_2), (i_2,v_2), (i_2,v_3), (i_2-1,v_3))\in \cB_4$.
\end{itemize}

\noindent{\em Repair of Block $\bfb\in \cB^2_2$}: 
Then $\bfb$ is of the form $((i_1,v_1), \infty, (i_2,v_2), (i_3,v_3))$ with $i_1\ne i_2$. Then $i_1+i_2+i_3=0\pmod{3}$ implies that $\{i_1,i_2,i_3\}=\{0,1,2\}$.
\begin{itemize}
    \item ${\sf R}_1 = \{(i_1,v_1), \infty \}$ in the block $\bfb_1=((i_1,v_1), \infty, (i_1,v_2), (i_1,v_3)) \in \cB_2^1$;
    \item If $i_3=i_2+1$, then ${\sf R}_2 = \{(i_3,v_3), (i_2,v_2)\}$ in the block $\bfb_2=((i_2+1,v_1), (i_2+2,v_2), (i_2+1,v_3), (i_2,v_2))\in \cB_3$.
    \item If $i_3=i_2+2$, then ${\sf R}_2 = \{(i_3,v_3), (i_2,v_2)\}$ in the block $\bfb_2=((i_2+2,v_1), (i_2,v_2), (i_2+2,v_3), (i_2+1,v_2))\in \cB_3$.
\end{itemize}

\noindent{\em Repair of Block $\bfb\in \cB_3$}: 
Then $\bfb$ is of the form $((i,v_1), (i+1,v_2), (i,v_3), (i+2,v_2))$. Then we pick the following reads and blocks.
\begin{itemize}
    \item ${\sf R}_1 = \{(i,v_1), (i+1,v_2)\}$ in the block $\bfb_1=((i+1,v_3), (i+2,v_1), (i+1,v_2), (i,v_1))\in \cB_3$;
    \item ${\sf R}_2 = \{(i,v_3), (i+2,v_2)\}$ in the block $\bfb_2=((i+2,v_2), (i,v_3), (i+2,v_1), (i+1,v_3))\in \cB_3$.
\end{itemize}

\noindent{\em Repair of Block $\bfb\in \cB_4$}: 
Then $\bfb$ is of the form $\bfb=((i,v_1), (i+1,v_1), (i+1,v_2), (i,v_2))$. Then we have the following three sub-cases.

{\em If $v_1=0, v_2= 1$}, then $\bfb=((i,0), (i+1,0), (i+1,1), (i,1))$. Let $v\in [0,N]$.  Then we pick the following reads and blocks.
\begin{itemize}
    \item ${\sf R}_1 = \{(i,0), (i+1,0)\}$ in the block $\bfb_1=((i,0), (i+1,0), (i+1,v), (i,v))\in \cB_4$, where $v > 1$.
    \item ${\sf R}_2 = \{(i+1,1), (i,1)\}$ in the block $\bfb_1=((i,1), (i+1,1), (i+1,v), (i,v)\in \cB_4$, where $ v > 1$.
\end{itemize}

{\em If $v_1= N-2$}, then $v_2=N-1$ and $\bfb=((i,N-2), (i+1,N-2), (i+1,N-1), (i,N-1))$.
\begin{itemize}
    \item ${\sf R}_1 = \{(i,N-2), (i+1,N-2)\}$ in the block $\bfb_1=((i,v), (i+1,v), (i+1,N-2), (i,N-2))\in \cB_4$, where $v < N-2$.
    \item ${\sf R}_2 = \{(i+1,N-1), (i,N-1)\}$ in the block $\bfb_1=((i,v), (i+1,v), (i+1,N-1), (i,N-1)\in \cB_4$, where $v < N-1$.
\end{itemize}

{\em If $(v_1,v_2)\ne (0,1)$ and $v_1\ne N-2$}, then $\bfb=((i,v_1), (i+1,v_1), (i+1,v_2), (i,v_2))$.
\begin{itemize}
    \item ${\sf R}_1 = \{(i,v_1), (i+1,v_1)\}$ in the block $\bfb_1=((i,v_1), (i+1,v_1), (i+1,v), (i,v))\in \cB_4$, where $v \ne v_2$.
    \item ${\sf R}_2 = \{(i+1,v_2), (i,v_2)\}$ in the block $\bfb_1=((i,v), (i+1,v), (i+1,v_2), (i,v_2))\in \cB_4$, where $v \ne v_1$.
\end{itemize}

\noindent{\em Repair of Block $\bfb\in \cB_5$}: 
Then $\bfb$ is of the form $( \infty, (0,v),(1,v), (2,v))$. Let ${\sf R}_1 = \{\infty,(0,v)\}$ and ${\sf R}_2=\{(1,v), (2,v)\}$.  Then we pick the following reads and blocks.
\begin{itemize}
    \item For the second read ${\sf R}_2=\{(1,v), (2,v)\}$ of  $\bfb$ can be obviously repaired from $\cB_4$. The specific repair process is as follows:
    \begin{itemize}
        \item If $v = 0$, then ${\sf R}_2 = \{(1,0), (2,0)\}$ in the block $\bfb_2=((1,0), (2,0), (2,v_1), (1,v_1))\in \cB_4$, where $v_1 > 0$.
        \item If $v < N-1$, then ${\sf R}_2 = \{(1,v), (2,v)\}$ in the block $\bfb_2=((1,v), (2,v), (2,v_1), (1,v_1))\in \cB_4$, where $v_1 > v$.
        \item If $v = N-1$, then ${\sf R}_2 = \{(1,N-1), (2,N-1)\}$ in the block $\bfb_2=((1,v_1), (2,v_1+1), (2,N-1), (1,N-1))\in\cB_4 $, where $v_1 < N-1$.
    \end{itemize}

    \item For the first read ${\sf R}_1$, for a given $v\in [0,N-1]$, there exists a subset of $v\in\{ \infty, v, v_1, v_2\}\in \cB$,
    \begin{itemize}
        \item If $v< v_1 < v_2$,  then ${\sf R}_1 = \{ \infty,(0,v)\}$ in the block $\bfb_1=((0,v), \infty, (0,v_2), (0,v_1))\in \cB_2^1$;
        \item If $v_1< v < v_2$,  then ${\sf R}_1 = \{ \infty, (0,v)\}$ in the block $\bfb_1=((1,v_1), \infty, (0,v), (2,v_2))\in \cB_2^2$;
        \item If $v_1< v_2 < v$,  then ${\sf R}_1 = \{ \infty, (0,v)\}$ in the block $\bfb_1=((0,v_1), \infty, (0,v), (2,v_2))\in \cB_2^1$.
    \end{itemize}
\end{itemize}

\section{SQS from Method of Differences}
\label{app:sqs-differences}

Here, we prove Proposition~\ref{prop:sqs-differences}(i) and then list down the base blocks for $\sqs(34)$ and $\sqs(38)$.

\begin{proof}[Proof of Proposition~\ref{prop:sqs-differences}(i)]
    We remark that the proof of Proposition~\ref{prop:sqs-differences}(ii) is similar and hence, omitted.
    Consider any pair $(x,y)$ and let $z = |x-y|$.
    From the conditions of the proposition, suppose that $z$ appears as a difference in blocks $\bfb_1$ and $\bfb_2$ (with possibly $\bfb_1=\bfb_2$).
    
    Suppose that $\bfb_1=(a,b,c,d)$ with $b=a+z$ and $y=x+z$.
    Then we choose $i=x-a$ and the block $\bfb+i=(x,b+x-a,c+x-a, d+x-a)=(x,y,c+x-a,d+x-a)$.
    Therefore, $(x,y)$ belongs to the block $\bfb_1+i$. Similarly, we can find another shift $j$ such that $\bfb_2+j$ contains $(x,y)$. In other words, $(x,y)$ belongs to different blocks and this concludes the proof. 
\end{proof}

Finally, we provide the base blocks for $\sqs(34)$ and $\sqs(38)$ that have repeated adjacent pairs.
As before, the base blocks given here are obtained from~\cite{ji2002improved}.

%\hm{Wenqin: can you just list the blocks here? Thanks a lot.}

\vspace{3mm}

\noindent{\em Base blocks for $\sqs(34)$}:
\[
\small
\arraycolsep=1.2pt
\begin{array}{llllll}
(0, 11, 22,\infty),^* &  (0, 1, 5, \infty), & (0, 2, 10, \infty), & (0, 3, 15, \infty), & (0, 6, 19, \infty ), \\
(0, 7, 16, \infty), & (0, 1, 2, 4), &  (0, 1, 6, 7), & (0, 1, 8, 9), & (0, 1, 10, 11), \\
(0, 1, 12, 13), & (0, 1, 14, 15), & (0 ,1 ,16, 29), &  (0, 1, 17, 31), & (0, 1, 18, 30), \\
(0, 2, 5, 7), & (0, 2, 6, 8), & (0, 2 ,9, 11), & (0, 2 ,12 ,14), &  (0, 2, 13, 16), \\
(0, 2, 15, 17), & (0, 2, 22, 25), & (0, 3, 6, 26), & (0, 3, 7, 25), & (0, 3, 8, 28), \\
(0, 3, 9, 17),&(0, 18, 10, 3),  & (0, 3, 12, 27),& (0, 3, 14, 29),& (0, 3, 16, 24), \\  
(0, 4, 8, 16),&(0, 4, 9, 28), & (0, 4, 10, 24),& (0, 4, 11, 25),& (0, 4, 13, 26),\\
(0, 19, 14, 4),&(0, 4, 15, 23),& (0, 4, 17, 27),& (0, 5, 10, 17), & (0, 5, 11, 21), \\
(0, 5, 15, 26), & (0, 16, 5, 22),&(0, 5, 18, 27), &(0, 6, 12, 21),& (0, 6, 13, 25),\\
(0, 7, 14, 24).
\end{array}\]
Here, the points are defined over $\ZZ_{33}\cup\{\infty\}$. The block marked with $*$ is a special base block that generates $33/3=11$ blocks.
All other base blocks generates $33$ blocks.
%As before, highlighted in \highlight{blue} are the blocks that contain all nonzero repeated differences.
%$\{0, v/3, 2v/3, \infty\}$  which generates $v/3$ blocks while all other base blocks each generates $v$ blocks.

\vspace{3mm}
\noindent{\em Base blocks for $\sqs(38)$}:
% under the automorphism group
% $ \{x \rightarrow mx+ b : m \in M=\{1,10,26\}, b \in \ZZ_{37}\}$
\[
\footnotesize
\arraycolsep=1.2pt
\begin{array}{llllll}
(0, 1, 27,\infty), &  (0, 2, 22, \infty), & (0, 3, 33, \infty), & (0, 5, 24, \infty), & (0, 6, 29, \infty), \\
(0, 9, 25, \infty), & (0, 1, 4, 11), & (0, 2, 17, 31), & ( 0, 3, 7, 28),& (0, 5, 18, 20),\\
(0, 6, 14, 19),& (0, 9, 10, 21 ),&(0,1,2,6),& (0,1,3,14),&(0,1,7,19),\\
(0, 1 ,9 ,35),& (0, 1, 10, 24),&(0, 30, 13, 1),&(0, 1, 15, 16),& (0, 1, 17, 33),\\
(0, 1, 20, 34),& (0, 1, 25, 29), &(0, 2, 4, 16),&(0, 2, 5, 8),&(0, 2, 7, 21),\\ 
(0, 2, 15, 18),& (0, 2 ,25, 30),&(0, 10, 20, 23),& (0, 10, 30, 29),& (0, 10, 33, 5),\\
(0, 10, 16, 17),& (0, 10, 26, 18),& (0, 10, 19, 4),& (0, 10, 2, 12),& (0, 10, 22, 34),\\
(0, 10, 15, 7),&(0, 10, 28, 31),& (0, 20, 3, 12),& (0, 20, 13, 6),& (0, 20, 33, 25), \\
(0, 20, 2, 32),& (0, 20, 28, 4),&(0, 26, 15, 8),& (4, 0, 26, 31),& (0, 26, 34, 13),\\
(0, 26, 12, 22),& (1, 0, 26, 32),& (0, 26, 5, 3),&(0, 26, 20, 9),& (0, 26, 35, 7),\\
(2, 0, 26, 33),& (0, 26, 21, 14),& (0, 15, 30, 9),& (0, 15, 19, 23),& (0, 15, 34, 28),\\
(0, 15, 20, 24),& (0, 15, 21, 3).
\end{array}
\]
Here, the points are defined over $\ZZ_{37}\cup\{\infty\}$. 
%As before, highlighted in \highlight{blue} are the blocks that contain all nonzero repeated differences.

\section{$\sqs(14)$ with Locality Two and Zero Skip Cost}
\label{app:sqs-14}

Here, we simply list down the blocks for $\sqs(14)$ with 14 symbols $\{0,1,\ldots,9,A,B,C,D \}$.
We remark that these blocks were first given in Hanani~\cite{hanani1960quadruple} and we adjusted the arrangement to obtain the desired skip cost.
\[
\footnotesize
\arraycolsep=1pt
\begin{array}{llllll}
(0,1,2,5), & (0,3,8,D), & (1,2,3,6), & (1,5,7,C), & (2,4,A,C), & (3,5,7,9),\\
(4,7,9,C), &(0,1,3,B), & (0,3,9,A), & (1,2,4,7), & (1,5,8,9), & (2,4,B,D), \\
(3,5,8,B), &  (5,6,7,8), & 
(0,1,4,6), &  (0,4,5,9), &  (1,2,8,B), &  (1,5,B,D),\\
(2,5,7,B), &  (3,5,A,C),&  (5,6,9,B), & 
(0,8,1,7), &  (0,4,7,B), &  (1,2,9,A), \\
(1,6,7,9), &  (2,5,8,A), &  (3,6,7,B), & (5,D,6,C), & 
(0,1,9,D), &  (0,4,8,A),\\
(1,2, C,D), &  (1,6,8,D), &  (2,5,9,C), &  (3,6,8,9), &  (5,9,A,D), &
(0,1,A, C), \\
(0,4,C,D), &   (1,3,4,5), &  (1,6,B,C), &  (2,6,7, C), &  (3,6,A, D), &  (6,8,A,C),\\
(0,2,3,4), &  (0,5,7,D), & (1,3,7,D), &  (1,7,A,B), &  (2,6,9,D), & (3,B,C,D), \\
(7,8,9,A), & (0,2,6,8), &  (0,5,8,C), &  (1,3,8,A), &  (2,3,5,D), &  (2,6,A,B), \\
 (4,5,7,A), &  (7,8,B,C), & 
(0,2,7,9), &  (0,5,A,B), &  (1,3,9,C), & (2,3,7,A),\\
(2,7,8,D), &  (4,5,8,D), &  (7,9,B,D),& 
(0,2,A,D), &  (0,6,7,A), &  (1,4,8,C),\\
(2,3,8,C), &  (3,4,6,C), &  (4,5,B,C),&  (7,A, C,D), & 
(0,2,B,C), &  (0,6,9,C),\\
(1,4,9,B), &  (2,3,9,B), &  (3,4,7,8), &  (4,6,7,D), &  (8,9,C,D), &  (0,3,5,6),\\
(0,6,B,D),&  (1,4,A,D), &  (2,4,5,6), &  (3,4,9,D), & (4,6,8,B), &  (8,A,B,D),\\
(0,3,7,C), &  (0,8,9,B),&  (1,5,6,A), &  (2,4,8,9), &  (3,4,A, B), &  (4,6,9,A), \\
(9,A,B,C).
\end{array}
\]
%\hm{Wenqin: can you just list the blocks here? Thanks a lot.}

\end{document}